\let\csname equation*\endcsname\relax
\let\csname endequation*\endcsname\relax
\title
      {The Correctness of Launchbury's Natural Semantics for Lazy Evaluation}
 \author
        {Joachim Breitner\\ 
	\small{Programming Paradigms Group}\\
        \small{Karlsruhe Institute of Technology, Germany}\\
        \small{\url{breitner@kit.edu}}}
\newtheorem{theorem}{Theorem}
\newtheorem{falsetheorem}[theorem]{‘Theorem’}
\newtheorem{counterexample}[theorem]{Counter example}
\newtheorem{lemma}{Lemma}
\theoremstyle{nonumberbreak}
\newtheorem{proof}{Proof}
\newcommand\pfun{\mathrel{\ooalign{\hfil$\mapstochar\mkern5mu$\hfil\cr$\to$\cr}}}
\newcommand{\sVar}  {\text{\textsf{Var}}}
\newcommand{\sExp}  {\text{\textsf{Exp}}}
\newcommand{\sHeap} {\text{\textsf{Heap}}}
\newcommand{\sVal}  {\text{\textsf{Val}}}
\newcommand{\sValue}{\text{\textsf{Value}}}
\newcommand{\sEnv}  {\text{\textsf{Env}}}
\newcommand{\sFn}[1]{\text{\textsf{Fn}}\,(#1)}
\newcommand{\sFnProj}[2]{#1\,\downarrow_{\text{\textsf{Fn}}}\,#2}
\newcommand{\req}[1]{=_{#1}}
\newcommand{\sApp}[2]{#1\;#2}
\newcommand{\sLam}[2]{\text{\textlambda} #1.\, #2}
\newcommand{\sLet}[2]{\text{\textsf{let}}\ #1\ \text{\textsf{in}}\ #2}
\newcommand{\sred}[4]{#1 : #2 \Downarrow #3 : #4}
\newcommand{\ssred}[4]{#1 : #2 \mathrel{\rotatebox[origin=c]{90}{$\Lleftarrow$}} #3 : #4}
\newcommand{\sRule}[1]{\text{{\textsc{#1}}}}
\newcommand{\dom}[1]{\text{dom}\;#1}
\newcommand{\xeng}{x_1 = e_1, \ldots, x_n = e_n}
\newcommand{\xen}{x_1\mapsto e_1, \ldots, x_n\mapsto e_n}
\newcommand{\dsem}[2]{\llbracket #1 \rrbracket_{#2}}
\newcommand{\esem}[1]{\{\!\!\!\{#1\}\!\!\!\}}
\newcommand{\dsemu}[2]{\llbracket #1 \rrbracket^{\text{u}}_{#2}}
\newcommand{\esemu}[1]{\{\!\!\!\{#1\}\!\!\!\}^{\text{u}}}
\newcommand{\case}[1]{\par\smallskip\noindent\textbf{Case:} #1\nopagebreak\par\noindent\ignorespaces}
\newcommand{\beginright}{&\mathrel{\phantom{=}}}
\newcommand{\aexpl}[1]{&\mathrel{\phantom{=}}\left\{\text{ #1 }\right\}}
\newcommand{\shortcite}{\cite}
\begin{document}

\label{firstpage}

\maketitle

\begin{abstract}
In his seminal paper „A Natural Semantics for Lazy Evaluation“, John Launchbury
proves his semantics correct with respect to a denotational semantics. We
machine-checked the proof and found it to fail, and provide two ways to fix it:
One by taking a detour via a modified natural semantics with an explicit stack,
and one by adjusting the denotational semantics of heaps.
\end{abstract}

\tableofcontents

\section{Introduction}

The Natural Semantics for Lazy Evaluation created by Launchbury \shortcite{launchbury} has turned out to be a popular and successful foundation for theoretical treatment of lazy evaluation, especially as the basis of semantic extensions \cite{nakata, nakata_blackhole, distributed, mixed, parallel}. Therefore, its correctness and adequacy is important in this field of research. The original paper defines a standard denotational semantics to prove the natural semantics correct against, and outlines the adequacy proof.

Unfortunately, the correctness proof is flawed: To show that a closed term evaluates to a value with the same denotation by induction on the derivation of the natural semantics, Launchbury generalizes the correctness statement to non-empty semantic environments. This is Theorem~2 in \cite{launchbury}, and a counter-example can be given (see Section~\ref{counterexample}). Several later works based on Launchbury rely on this proof and hence also contain the error.

Fortunately, the error only affects the proof and the correctness theorem still holds for empty environments. One way to prove this is to add an explicit stack to the judgments of the semantics, to capture more of the context of evaluation (Section \ref{stackedsemantics}). This way, we need not generalize the statement to arbitrary environments for the inductive proof. As both semantics are equivalent, this provides the correctness of Launchbury's semantics.

Another way to fix the problem is to modify the meaning of the $\sqcup$ operator in the semantics of heap: If this is understood to be a right-sided update instead of a least upper bound, the original proof by Launchbury goes through almost unmodified. We reproduce this in Section \ref{updsemantics}.

All definitions, propositions and proofs were mechanically verified using the theorem prover Isabelle/HOL \cite{isabelle} and can be found in the Archive of Formal Proofs \cite{afp}. Therefore, we take the liberty to concentrate on the important steps and tricky calculations of the proofs here. In particular, we will not explicate the treatment of names, we implicitly expect heaps to be distinctly named and we do not show that partial operations like $\sqcup$ are defined where used. For all gory details, we refer the interested reader to the Isabelle proof document, which not only contains the full proofs, but also has the LaTeX code of the theorems automatically generated from the proved statements.

Our contributions are:
\begin{compactitem}
\item We exhibit an error in the original correctness proof of Launchbury's semantics.
\item We provide a variant of Launchbury's semantics that allows for simpler proofs.
\item We prove its correctness and, by equivalency, the correctness of the original semantics.
\item We show that a modification to the original denotational semantics allows the original correctness proof to go through.
\item All these results are formally proven and machine-checked.
\end{compactitem}

\section{Launchbury's semantics}

Launchbury defines a semantics for a simple untyped lambda calculus consisting of variables, lambda abstraction, applications and recursive let bindings:
\begin{alignat*}{2}
x,y,z,w &\in \sVar
\displaybreak[1]
\\
e &\in
\sExp &&\Coloneqq
\sLam x e
\mid \sApp e x
\mid x \mid
\sLet {\xeng} e
\end{alignat*}

It is worth noting that the term on the right hand side of an application has to be a variable. A general lambda term of the form $\sApp{e_1}{e_2}$ would have to be pre-processed to $\sLet{x = e_2}{\sApp{e_1}x}$ before it can be handled by this semantics.

\subsection{Natural semantics}

\begin{figure}
\begin{mathpar}
\inferrule
{ }
{\sred{\Gamma}{\sLam xe}{\Gamma}{\sLam xe}}
\sRule{Lam}
\and
\inferrule
{\sred{\Gamma}e{\Delta}{\sLam y e'}\\ \sred{\Delta}{e'[x/y]}{\Theta}{v}}
{\sred\Gamma{\sApp e x}\Theta v}
\sRule{App}
\and
\inferrule
{\sred\Gamma e \Delta v}
{\sred{\Gamma, x\mapsto e} x {\Delta, x\mapsto v}{v}}
\sRule{Var}
\and
\inferrule
{\sred{\Gamma,\xen} e \Delta v}
{\sred{\Gamma}{\sLet{\xeng}e} \Delta v}
\sRule{Let}
\end{mathpar}
\caption{The original natural semantics}
\label{fig:natsem}
\end{figure}

Launchbury gives this language meaning by a natural semantics, specified with the rules in Figure \ref{fig:natsem}, which obey the following naming convention for heaps and values:
\begin{alignat*}{2}
\Gamma, \Delta, \Theta &\in \sHeap &&= \sVar \pfun \sExp
\displaybreak[1]\\
v &\in \sVal &&\Coloneqq \sLam x e
\end{alignat*}

A heap is a partial function from variables to expressions. The domain of an heap $\Gamma$, written $\dom\Gamma$, is the set of variables bound in the heap.

A value is an expression in weak head normal form. A judgment of the form $\sred \Gamma e \Delta v$ means that the expression $e$ with the heap $\Gamma$ reduces to $v$, while modifying the heap to $\Delta$.

In this work we treat naming and binding naively, as these issues are not essential for our results. For example, it is understood that the variables in the premise of the rule \sRule{Let} are actually fresh and distinct from the variables bound in the let expression. Likewise the substitution $e'[x/y]$ replaces all free occurrences of $y$ in $e'$ with $x$  and a “fresh variable” is fresh with regard to the derivation tree and not just the elements of the current judgment. To express the latter issue rigorously, we would have to add to the judgment a set of variables to avoid, following Sestoft \shortcite{sestoft}.

We can take these shortcuts here with good conscience, as all proofs also exist in machine-checked form and there, naming has been handled rigorously.

\subsection{Denotational semantics}

In order to show that the natural semantics behaves as expected, Launchbury defines a standard denotational semantics for expressions and heaps, following Abramsky \shortcite{abramsky}. The semantic domain $\sValue$ is the initial solution to the domain equation 
\[
\sValue = (\sValue \to \sValue)_\bot,
\]
which allows to distinguish $\bot$ from $\lambda x. \bot$. Lifting between $\sValue \to \sValue$ and $\sValue$ is performed using the injection $\sFn \_$ and projection $\sFnProj{\_}{}$. Values are partially ordered by~$\sqsubseteq$.

A semantic environment maps variables to values
\begin{alignat*}{2}
\rho \in \sEnv &= \sVar \to \sValue
\end{alignat*}
and the initial environment $\rho_\bot$ maps all variables to $\bot$.

The domain of an environment $\rho$, written $\dom\rho$, is the set of variables that are not mapped to $\bot$.
The environment $\rho|_S$, where $S$ is a set of variables, is the domain-restriction of $\rho$ to $S$:
\[
(\rho |_S)\, x = 
\begin{cases}
\rho\, x,& \text{if } x \in S\\
\bot& \text{if } x \not\in S.
\end{cases}
\]
The environment $\rho\setminus S$ is defined as the the domain-restriction of $\rho$ to the complement of $S$, i.e.\ \mbox{$\rho\setminus S \coloneqq \rho|_{\sVar \setminus S}$}.

The semantics of expressions and heaps are mutually recursive. The meaning of a expression $e \in \sExp$ in an environment $\rho \in \sEnv$ is written as $\dsem e \rho \in \sValue$ and is defined as
\begin{align*}
\dsem{\sLam x e}\rho &\coloneqq \sFn{\lambda v. \dsem e {\rho \sqcup \{x \mapsto v\}}}\\
\dsem{\sApp e  x}\rho &\coloneqq \sFnProj {\dsem e \rho}{\dsem x \rho}\\
\dsem{x}\rho &\coloneqq \rho\,x\\
\dsem{\sLet{\xeng}e}\rho &\coloneqq \dsem e {\esem{\xeng}\rho.}
\end{align*}

The meaning of a heap $\Gamma \in \sHeap$ in an environment $\rho$ is $\esem \Gamma \rho \in \sEnv$, defined as
\begin{align*}
\esem{ \xen}\rho 
&= \mu \rho'. \rho \sqcup (x_1 \mapsto \dsem{e_1}{\rho'}, \ldots, x_n \mapsto \dsem{e_n}{\rho'})
\end{align*}
where $\sqcup$ is the least upper bound and $\mu$ is the least-fixed-point operator. This definition only makes sense when all occurring least upper bounds actually exist. This is the case here, as shown in the machine-checked proofs.

We sometimes write $\esem{\Gamma}$ instead of $\esem{\Gamma}{\rho_\bot}$, and we write $\dsem{\xen}\rho$ for $(x_1\mapsto \dsem{e_1}\rho,\ldots,x_n\mapsto \dsem{e_n}\rho)$ as it occurs in the definition of $\esem{\_}\_$. In an expression $\esem{\Gamma}(\esem{\Delta}\rho)$ we omit the parentheses and write $\esem{\Gamma}\esem{\Delta}\rho$.

The relation $\sqsubseteq$ on environments is $\sqsubseteq$ on $\sValue$, lifted pointwise.

Launchbury additionally introduces the partial order $\le$ on environments, where $\rho \le \rho'$ is defined as $\forall x. \rho\, x \ne \bot \implies \rho\,x =\rho'\,x$. While this captures, as intended, the concept of $\rho'$ adding bindings to $\rho$, the use of this definition in the proofs is problematic, as discussed in the following section.

The notation $\esem{\Gamma}\rho \preceq \esem{\Delta}{\rho'}$ is an abbreviation for the expression $\dom\Gamma \subseteq \dom\Delta \wedge \forall x\in \dom\Gamma. (\esem{\Gamma}\rho)\,x = (\esem{\Delta}{\rho'})\,x$, and $\preceq$ is transitive.

\subsection{The correctness theorem and the counter example}
\label{counterexample}

The main correctness theorem for the natural semantics is
\begin{theorem}
If $\sred \Gamma e \Delta v$, then $\dsem{e}{\esem{\Gamma}} = \dsem{v}{\esem{\Delta}}$.
\label{thm:main}
\end{theorem}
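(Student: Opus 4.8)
The plan is to prove Theorem~\ref{thm:main} by structural induction on the derivation of the judgment $\sred \Gamma e \Delta v$, treating one case per inference rule. The rules \sRule{Lam} and \sRule{Let} ought to be the straightforward base and recursion cases: for \sRule{Lam} the heap and expression are unchanged, so the goal $\dsem{\sLam xe}{\esem\Gamma} = \dsem{\sLam xe}{\esem\Gamma}$ is immediate, while for \sRule{Let} I would unfold the denotation of a let-expression, observe that $\dsem{\sLet{\xeng}e}{\esem\Gamma} = \dsem e{\esem{\xeng}\esem\Gamma}$, and argue that $\esem{\xeng}\esem\Gamma$ coincides with the semantics of the combined heap $\esem{\Gamma,\xen}$, so that the induction hypothesis on the premise applies directly.

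The interesting cases are \sRule{App} and \sRule{Var}. For \sRule{App} I would apply the induction hypothesis to both premises: the first gives $\dsem e{\esem\Gamma} = \dsem{\sLam ye'}{\esem\Delta}$ and the second gives $\dsem{e'[x/y]}{\esem\Delta} = \dsem v{\esem\Theta}$. Unfolding the application rule of the denotational semantics turns the left-hand side into $\sFnProj{\dsem e{\esem\Gamma}}{\dsem x{\esem\Gamma}}$; substituting the first hypothesis and evaluating the projection of the lambda against the argument, I would need a substitution lemma relating $\dsem{e'[x/y]}\rho$ to $\dsem{e'}{\rho\sqcup\{y\mapsto\rho\,x\}}$, which lets me rewrite $\sFnProj{\dsem{\sLam ye'}{\esem\Delta}}{\dots}$ into $\dsem{e'[x/y]}{\esem\Delta}$ and then chain the second hypothesis. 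For \sRule{Var} I would use the induction hypothesis $\dsem e{\esem\Gamma} = \dsem v{\esem\Delta}$ on the premise and then relate the semantics of the heap $\Gamma, x\mapsto e$ (and $\Delta, x\mapsto v$) back to $\esem\Gamma$ and $\esem\Delta$, so that $\dsem x{\esem{\Gamma,x\mapsto e}} = (\esem{\Gamma,x\mapsto e})\,x$ unfolds to the denotation of $e$ and similarly on the right.

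The main obstacle is exactly the \sRule{Var} and \sRule{App} reasoning about how the heap semantics changes as bindings are added or as the heap grows from $\Gamma$ to $\Delta$ to $\Theta$. The naive hope is that evaluating under a fixed environment suffices, but the mutual recursion between expression and heap denotations means that $\esem\Delta$ is not simply $\esem\Gamma$ with extra bindings in a way that leaves the old bindings untouched; one needs a monotonicity or agreement lemma of the form $\esem\Gamma{\esem\Delta} \preceq \esem\Delta$ together with the preservation of denotations across the evaluation. This is precisely where, according to the introduction, Launchbury's original argument generalizes the statement to arbitrary environments $\rho$ and runs into trouble, so I expect the crux to be formulating a correct invariant about the relationship between $\esem\Gamma$ and $\esem\Delta$ (for instance that the fixed point defining $\esem\Delta$ dominates or agrees with that of $\esem\Gamma$ on $\dom\Gamma$) that is strong enough to carry the induction through the \sRule{Var} and \sRule{App} cases without the flawed environment generalization.
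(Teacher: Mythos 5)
You have correctly located the difficulty, but locating it is all the proposal does: your plan ends with the hope of ``formulating a correct invariant'' relating $\esem{\Gamma}$ and $\esem{\Delta}$, and such an invariant in any simple form is precisely what does not exist --- working around this is the content of the paper. Concretely, your direct induction breaks in the \sRule{Var} case. There the conclusion concerns $\esem{\Gamma, x\mapsto e}$, while the induction hypothesis speaks about $\esem{\Gamma}$, i.e.\ the fixed point of the heap \emph{without} the binding for $x$, taken over the empty environment $\rho_\bot$. The only way to connect the two is to write $\esem{\Gamma, x \mapsto e}$ as an iterated fixed point such as $\mu \rho'.\, \esem{\Gamma}{\rho'} \sqcup (x \mapsto \dsem{e}{\esem{\Gamma}{\rho'}})$ and to apply the induction hypothesis at the environments $\rho'$ produced by that iteration; these are not $\rho_\bot$, and their domains overlap $\dom\Gamma$. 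So the statement you induct on must be generalized to arbitrary environments --- which is exactly Launchbury's `Theorem'~\ref{thm:false}, refuted by the counter example in Section~\ref{counterexample}; and the same section shows that the natural restricted generalizations fail too: pointwise bounds on $\rho$ relative to the heap are either not preserved by evaluation or too weak for the inductive step, while requiring $\dom\rho$ disjoint from $\dom\Gamma$ is violated by the very environments $\rho'$ above. There is also a smaller, fixable gap: your induction statement is too weak even for \sRule{App}, since after rewriting with the first hypothesis you must replace $\dsem{x}{\esem{\Gamma}}$ by $\dsem{x}{\esem{\Delta}}$, which needs a second conjunct of the form $\esem{\Gamma} \preceq \esem{\Delta}$ carried through the whole induction.

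The paper therefore does not perform induction over the natural semantics at all. It proves Theorem~\ref{thm:main} by a detour: the natural semantics is shown equivalent to a stacked semantics (Theorem~\ref{thm:equiv}), whose judgments keep the expression under evaluation --- in particular the binding $x \mapsto e$ being forced in \sRule{Var} --- inside the judgment itself; correctness of that semantics (Theorem~\ref{thm:stackedcorrectness}) is then provable by induction with no quantification over environments, the \sRule{Var} case collapsing to the heap-substitution Lemma~\ref{lem:var_var_subst}. Theorem~\ref{thm:main} then follows via Lemmas~\ref{lem:esem_this} and~\ref{lem:see_through_fresh}. (The alternative fix, Section~\ref{updsemantics}, changes $\sqcup$ to a right-sided update, under which the environment-generalized statement becomes true, Theorem~\ref{thm:thm2}.) Without adopting one of these two routes, or supplying a genuinely new invariant, your plan cannot be completed.
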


Launchbury generalizes this to arbitrary environments, supposedly to obtain a stronger result, and to enable a proof by induction:

\begin{falsetheorem}
If $\sred \Gamma e \Delta v$ holds, then for all environments $\rho \in \sEnv$, $\dsem{e}{\esem{\Gamma}{\rho}} = \dsem{v}{\esem{\Delta}{\rho}}$ and $\esem\Gamma\rho \le \esem\Delta\rho$.%
\label{thm:false}
\end{falsetheorem}

\begin{counterexample}
`Theorem' \ref{thm:false} does not hold for $e = x$, $v = \sLam{a}{\sLet{b = b}b}$, $\Gamma = \Delta = (x \mapsto v)$ and $\rho = (x \mapsto \sFn{\lambda \_. \sFn{\lambda x.x}})$. 
\end{counterexample}

\begin{proof}
Note that the denotation of $v$ is $\sFn{\lambda \_. \bot}$ in every environment. We have $\sred\Gamma e\Delta v$, so according to the theorem, $\dsem{e}{\esem{\Gamma}{\rho}} = \dsem{v}{\esem{\Delta}{\rho}}$ should hold, but
\begin{align*}
\dsem{e}{\esem\Gamma\rho}
&= \big(\esem \Gamma\rho\big)\,x \\
&= \rho\,x \sqcup \dsem{v}{\esem{\Gamma}\rho}\\
&= \sFn{\lambda \_. \sFn{\lambda x.x}} \sqcup \sFn{\lambda \_. \bot} \\
&= \sFn{\lambda \_. \sFn{\lambda x.x} \sqcup \bot} \\
&= \sFn{\lambda \_. \sFn{\lambda x.x}} \\
& \ne \sFn{\lambda \_. \bot} \\
&= \dsem{v}{\esem{\Delta}{\rho}}.
\end{align*}
\end{proof}

Tracing the counter example through the original proof we find that in the case for $\sRule{Var}$, the equation $({\esem{\Delta, x\mapsto z}\rho})\, x = \dsem{z}{\esem{\Delta, x\mapsto z}\rho}$ is used, while in fact $({\esem{\Delta, x\mapsto z}\rho})\, x = \rho\, x \sqcup \dsem{z}{\esem{\Delta, x\mapsto z}\rho}$ holds. So the problem occurs when $\rho$ contains bindings that are, in some way, incompatible with the semantics of $\Gamma$.

\subsubsection{Failed attempts at fixing the proof}

The main Theorem \ref{thm:main} is not affected by the flaw, as such “bad” environments do not occur during the evaluation of closed expressions. So it seems that `Theorem' \ref{thm:false} can be fixed by restricting $\rho$ to a certain subset of all environments that somehow comprises of all environments occurring in the inductive proof. Such a property will relate $\rho$ with the semantics of $\Gamma$, has to hold for $\rho_\bot$ and needs to be strong enough to hold for the inductive cases. In particular, it needs to be preserved by evaluation. Unfortunately, the required property does not appear to have a simple definition. As negative results can be very educating as well, we briefly discuss some apparent definitions and why they fail.

As we have to relate $\rho$ with the semantics of $\Gamma$, a first attempt is to restrict `Theorem' \ref{thm:false} to environments for which $\forall x\mapsto e \in \Gamma.\,\rho\,x \sqsubseteq \dsem e \rho$ holds. But this property is not preserved by evaluation: Evaluating $x$ in the heap $\Gamma = (x \mapsto \sLet{y = z}{\sLam \_ y})$  yields the updated heap $\Delta = (x \mapsto \sLam{\_} y, y \mapsto z)$. An environment with $\rho\, x = \sFn{\lambda \_. \sFn{\lambda \_.\bot}}$, $\rho\, y = \bot$ and $\rho z = \sFn{\lambda \_.\bot}$ fulfills the property with respect to $\Gamma$, but not $\Delta$.

Obviously it is not sufficient to relate $\rho$ with the entries of the heap individually. Therefore, the next attempt is to consider environments for which $\forall x\mapsto e \in \Gamma.\,\rho\,x \sqsubseteq \dsem e {\esem{\Gamma}\rho}$ or, equivalently $\forall x\mapsto e \in \Gamma.\,(\esem{\Gamma}\rho)\,x \sqsubseteq \dsem e {\esem{\Gamma}\rho}$ holds. Here, our counter example is evaluating $x$ in $\Gamma=(x\mapsto \sLam zz, y \mapsto x)$. In the inductive case of rule $\sRule{Var}$, the heap becomes $(y \mapsto x)$, so an environment $\rho$ with $\rho\, x= \bot$ and $\rho\, y = \sFn{\lambda z.z}$ fulfills the condition with regard to $\Gamma$, but not with regard to $(y\mapsto x)$, so the condition is too weak to allow for an inductive proof.

A different approach would be to demand that the domain of $\rho$ is distinct from the set of variables bound in $\Gamma$ and $\Delta$. But in the case for \sRule{Var} in Launchbury's proof
the induction hypothesis is invoked for a $\rho'$ bound by the least-fixed-point operator in the term $(\mu \rho'. \esem{\Gamma}{\rho'} \sqcup (x \mapsto \dsem{e}{\esem{\Gamma}{\rho'}}) \sqcup \rho)$ and clearly the domain of $\rho'$ will include the variables bound by $\Gamma$, so again this requirement is too strong.

As there seems to be no easy characterization of the environments $\rho$ for which we need the result of `Theorem' \ref{thm:false}, we had to find a different proof, which is provided in Section~\ref{stackedsemantics}.

\subsubsection{A suitable alternative to $\le$}
\label{lealternatives}

A second pitfall is the use of the partial order $\le$ to capture that the denotation of the heap is not modified where defined, but only extended with new bindings. Again in the proof of case \sRule{Var} Launchbury first shows that $\esem{\Gamma, x\mapsto e}\rho \le \esem{\Delta, x \mapsto z}\rho$ and from that concludes $(\esem{\Gamma, x\mapsto e}\rho)\,x = (\esem{\Delta, x \mapsto z}\rho)\,x$ in order to show the first part of the correctness statement. But for that conclusion one would first have to show that $(\esem{\Gamma, x\mapsto e}\rho)\,x \ne \bot$, which is not true in general.

One approach to fix this would be to model environments as partial maps from $\sVar$ to $\sValue$. Then we could differentiate between variables not bound in $\rho$ and variables bound to $\bot$ and have $\le$ state that bound variables have to be equal, whether they are $\bot$ or not. This is the approach taken in \cite{afp}, where it also helps with other technical issues of the machine formalization, but it adds notational complexity that is unwanted for this presentation.

Similarly, one could define a relation $\rho \req{S} \rho'$ which is defined as $\forall x \in S. \rho\,x = \rho'\,x$, and always state the set of variables to be compared. But again the notational overhead is considerable.

Therefore, in this work, we simply use $\esem{\Gamma}\rho \preceq \esem{\Delta}{\rho'}$ to express that $\dom\Gamma \subseteq \dom\Delta$ and $\forall x\in \dom\Gamma. (\esem{\Gamma}\rho)\,x = (\esem{\Delta}{\rho'})\,x$.

\section{The stacked semantics}
\label{stackedsemantics}

In order to prove Theorem \ref{thm:main}, we take a detour via a different semantics that allows to perform the induction without the problematic generalization. The judgments of this semantics are of the form $\ssred \Gamma {\Gamma'} \Delta {\Delta'}$ and have new fields (storing a list of variable-expression pairs) that not only contain the expression currently under evaluation, but also the expressions whose evaluation has caused the current evaluation, together with their respective names. Such expressions are always either variables or applications, so the resulting data structures strongly resembles an evaluation stack, consisting of update frames and function parameters. We re-use the syntax of heaps (which are unordered) here, but keep in mind that $\Gamma'$ and $\Delta'$ are ordered, so that we can talk about the topmost expression.

The rules are given in Figure \ref{fig:stacksem}. When compared to Figure \ref{fig:natsem} one will find that $\sred{\Gamma}e{\Delta}v$ has become $\ssred{\Gamma}{z \mapsto e, \Gamma'}{\Delta}{z \mapsto v, \Delta'}$, i.e. the expression under evaluation is the topmost expression in the stack. The variables bound in the heap and stack on either side of the rule are always distinct -- another detail that is not further discussed here, but handled in the formal development.

The rules \sRule{Lam} and \sRule{Let} correspond directly to their counterpart in the original semantics.

The rule \sRule{App} takes the application $\sApp e x$ apart. First, the expression $e$ is bound to a new name $w$ and put on top of the stack, where $\sApp e x$ is replaced by $\sApp w x$, in order to evaluate $e$. This evaluation provides a lambda abstraction $\sLam y {e'}$ The body thereof is then, with $x$ substituted for $y$, continued to be evaluated.

The rule \sRule{Var} just shuffles between the heap and the stack: In order to evaluate a variable, its binding is removed from the heap and put on top of the stack. After evaluation, the binding is updated with the value and moved back to the heap.

Note that it follows from the rules that if we have $\ssred \Gamma {z \mapsto e, \Gamma'} {\Delta} {w \mapsto v, \Delta'}$, then $w = z$ and $\Delta' = \Gamma'$, i.e.~during one step of evaluation, only the topmost expression on the stack can change. We deliberately keep this redundancy in the semantics for a more natural presentation of, for example, the correctness statement and to allow for later extensions that might want to modify the stack, such as garbage collection.


\begin{figure}
\begin{mathpar}
\inferrule
{ }
{\ssred{\Gamma}{z \mapsto \sLam xe, \Gamma'}{\Gamma}{z \mapsto \sLam xe, \Gamma'}}
\sRule{Lam}
\and
\inferrule
{\ssred{\Gamma}{w \mapsto e, z \mapsto \sApp w x, \Gamma'}{\Delta}{w \mapsto \sLam y e', z \mapsto \sApp w x, \Delta'}\\
 \ssred{\Delta}{z \mapsto e'[x/y], \Delta'}{\Theta}{\Theta'}}
{\ssred\Gamma{z \mapsto \sApp e x, \Gamma'}\Theta{\Theta'}}
\sRule{App}
\and
\inferrule
{\ssred\Gamma {x \mapsto e, z\mapsto x, \Gamma'} \Delta {x \mapsto v, z\mapsto x, \Delta'}}
{\ssred{\Gamma, x\mapsto e}{z \mapsto x, \Gamma'}{\Delta, x \mapsto v}{z \mapsto v, \Delta'}}
\sRule{Var}
\and
\inferrule
{\ssred{\Gamma,\xen} {z \mapsto e, \Gamma'}\Delta {\Delta'}}
{\ssred{\Gamma}{z \mapsto \sLet{\xeng}e, \Gamma'} \Delta{\Delta'}}
\sRule{Let}
\end{mathpar}
\caption{The stacked semantics}
\label{fig:stacksem}
\end{figure}

\subsection{Equivalency with the natural semantics}

The stacked semantics is equivalent to the original semantics in the following sense:

\begin{theorem}
For all $\Gamma, \Gamma', \Delta, e, v$ we have
\label{thm:equiv}
\[
(\exists\,z~\Delta'.~ \ssred \Gamma {z \mapsto e, \Gamma'} {\Delta} {z \mapsto v, \Delta'})
\iff
\sred{\Gamma}e\Delta v.
\]
\end{theorem}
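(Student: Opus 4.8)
The plan is to prove both directions of the biconditional by induction on the respective derivations. The statement is an equivalence between the stacked semantics and the original semantics, and both semantics are defined by essentially the same four rules, differing only in the presence of the stack fields $\Gamma'$, $\Delta'$. So the natural strategy is to show that each derivation in one system can be transformed into a derivation in the other, rule by rule.

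First I would tackle the direction $(\Leftarrow)$, from the original semantics to the stacked one. I would prove a suitably generalized statement by induction on the derivation of $\sred{\Gamma}e\Delta v$: for \emph{every} choice of stack tail $\Gamma'$ and name $z$, we have $\ssred \Gamma {z \mapsto e, \Gamma'} {\Delta} {z \mapsto v, \Gamma'}$ (recall from the remark after the rules that the stack tail is preserved and $\Delta' = \Gamma'$). The generalization over $\Gamma'$ and $z$ is essential, because in the inductive cases the stack grows: in \sRule{App}, for example, the subderivation for $e$ must be instantiated with the extended stack $z \mapsto \sApp w x, \Gamma'$ and a fresh name $w$, and the subderivation for $e'[x/y]$ reuses the name $z$ with tail $\Delta'$. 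Each of the four cases then matches a stacked rule almost literally; the only care needed is threading the correct stack and choosing fresh names consistently.

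For the direction $(\Rightarrow)$, from the stacked semantics to the original one, I would again induct on the derivation of the stacked judgment, but now I must \emph{forget} the stack. The cleanest formulation is: whenever $\ssred \Gamma {z \mapsto e, \Gamma'} {\Delta} {\Delta'}$, then $\Delta'$ has the form $z \mapsto v, \Gamma'$ for some $v$ (this is the redundancy already noted in the text), and moreover $\sred{\Gamma}e\Delta v$. Proving the structural fact that the stack tail is preserved and only the top entry changes should be folded into the same induction, since it is needed to line up the premises: in the \sRule{App} case, the first stacked premise produces $w \mapsto \sLam y {e'}, z \mapsto \sApp w x, \Delta'$, from which I extract $\sred\Gamma e\Delta{\sLam y{e'}}$ by the induction hypothesis, and the second premise gives $\sred{\Delta}{e'[x/y]}{\Theta}{v}$, so the original \sRule{App} applies.

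The main obstacle I expect is not the overall induction structure, which is routine, but the handling of \emph{freshness and naming} across the two systems. The stacked rules introduce fresh names ($w$ in \sRule{App}) and require that variables in the heap and the stack be globally distinct, whereas the original rules are stated more loosely; matching these conditions up --- and ensuring that the fresh name chosen in one system can be validly chosen in the other relative to the whole derivation tree --- is exactly the kind of detail the paper admits to treating naively here and rigorously only in the Isabelle development. A secondary subtlety is the \sRule{Var} case: in the stacked system the binding $x \mapsto e$ physically migrates from the heap into the stack and back as an update frame, so I must check that the heap $\Gamma, x \mapsto e$ on the original side corresponds correctly to the stacked configuration $\Gamma$ with $x \mapsto e$ pushed, and that the updated binding $x \mapsto v$ returns to the heap as the original \sRule{Var} rule demands. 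Everything else is bookkeeping, so I would state the two generalized lemmas precisely and let the rule-by-rule correspondence do the work.
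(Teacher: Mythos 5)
Your proposal is correct and matches the paper's approach: the paper's proof is simply ``Both directions are proved by induction,'' and your two generalized statements (quantifying over the stack tail $\Gamma'$ and name $z$ for the $\Leftarrow$ direction, and folding the stack-preservation fact $\Delta' = z \mapsto v, \Gamma'$ into the induction for the $\Rightarrow$ direction) are exactly the strengthenings needed to make those inductions go through. The freshness bookkeeping you flag is likewise deferred to the machine-checked development in the paper, so nothing is missing.
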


\begin{proof}
Both directions are proved by induction.
\end{proof}

\subsection{Correctness}

For the stacked semantics, we prove correctness with respect to the denotational semantics, in the sense that reduction of a heap and stack preserves their denotation:

\begin{theorem}
For all $\Gamma, \Gamma', \Delta, \Delta'$ we have 
\label{thm:stackedcorrectness}
\[
\ssred \Gamma {\Gamma'} {\Delta} {\Delta'} \implies \esem{\Gamma, \Gamma'} \preceq \esem{\Delta, \Delta'}.
\]
\end{theorem}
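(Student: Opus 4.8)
The plan is to prove the statement by induction on the derivation of $\ssred\Gamma{\Gamma'}\Delta{\Delta'}$, exploiting that the base environment is always $\rho_\bot$, which is exactly what fails in Launchbury's generalisation to arbitrary $\rho$. Throughout I will use that $\preceq$ is transitive (as noted above) and reflexive (immediate from its definition), and I will rely on a handful of standard facts about $\esem\cdot{}$, each really a statement about least fixed points: a substitution lemma $\dsem{e[x/y]}\rho = \dsem e{\rho \sqcup \{y\mapsto \rho\,x\}}$; a let-flattening lemma stating that moving the bindings of a $\sLet{\xeng}{e}$ into the surrounding heap adds $x_1,\ldots,x_n$ to the domain while preserving the denotations of all previously bound variables; a congruence lemma stating that replacing a binding $z\mapsto e_1$ by $z\mapsto e_2$ leaves $\esem\cdot{}$ unchanged whenever $e_1$ and $e_2$ have equal denotation under the relevant fixed point; and a dead-variable lemma stating that a binding $w\mapsto e'$ whose variable $w$ occurs in no other binding may be dropped without affecting the denotations off $w$.

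The cases \sRule{Lam} and \sRule{Let} are then quick. For \sRule{Lam} both sides of $\preceq$ are the identical heap–stack $\Gamma, z\mapsto \sLam xe, \Gamma'$, so reflexivity suffices. For \sRule{Let} the induction hypothesis gives $\esem{\Gamma,\xen,z\mapsto e,\Gamma'} \preceq \esem{\Delta,\Delta'}$; by the let-flattening lemma $\esem{\Gamma, z\mapsto \sLet{\xeng}{e}, \Gamma'} \preceq \esem{\Gamma,\xen,z\mapsto e,\Gamma'}$, and transitivity closes the case. The case \sRule{Var} is where the restriction to $\rho_\bot$ pays off: since heaps are unordered, the left-hand heap–stack of the conclusion, $\Gamma, x\mapsto e, z\mapsto x, \Gamma'$, is literally the left-hand side of the premise, so the induction hypothesis already yields $\esem{\Gamma,x\mapsto e,z\mapsto x,\Gamma'} \preceq \esem{\Delta,x\mapsto v,z\mapsto x,\Delta'}$. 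By transitivity it remains to see that replacing $z\mapsto x$ by $z\mapsto v$ does not change the denotation, i.e.\ that $\dsem x\rho = \dsem v\rho$ at the fixed point $\rho = \esem{\Delta,x\mapsto v,z\mapsto x,\Delta'}$. Here $\dsem x\rho = \rho\,x$, and because the base environment is $\rho_\bot$ and $x\mapsto v$ is a binding, the fixed-point equation gives $\rho\,x = \rho_\bot\,x \sqcup \dsem v\rho = \dsem v\rho$. This is precisely the step $(\esem{\Delta,x\mapsto v}\rho)\,x = \dsem v{\esem{\Delta,x\mapsto v}\rho}$ that is false for the nonempty $\rho$ of the counterexample, but correct here.

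The case \sRule{App} requires chaining the two induction hypotheses. Writing $A = \esem{\Gamma, z\mapsto\sApp e x, \Gamma'}$ for the left of the goal, I would first introduce the fresh binding $w$: by the congruence and let-flattening lemmas, $A \preceq \esem{\Gamma, w\mapsto e, z\mapsto\sApp w x, \Gamma'}$. The first induction hypothesis then gives $\esem{\Gamma, w\mapsto e, z\mapsto\sApp w x, \Gamma'} \preceq \esem{\Delta, w\mapsto\sLam y{e'}, z\mapsto\sApp w x, \Delta'} =: C$, so $A \preceq C$. The second induction hypothesis gives $D := \esem{\Delta, z\mapsto e'[x/y], \Delta'} \preceq \esem{\Theta,\Theta'}$, so by transitivity it suffices to establish $A \preceq D$. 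Since $w$ is fresh we have $w\notin\dom A$, and $A\preceq C$ already yields $\dom A\subseteq\dom C$ and agreement of $A$ and $C$ on $\dom A$; hence it is enough to show that $C$ and $D$ agree on $\dom D$ (which excludes $w$), as then $\dom A\subseteq\dom C\setminus\{w\} \subseteq \dom D$ and the values carry over.

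The main obstacle is this last claim: that the pre-beta heap $\Delta, w\mapsto\sLam y{e'}, z\mapsto\sApp w x, \Delta'$ and the post-beta heap $\Delta, z\mapsto e'[x/y], \Delta'$ denote the same environment away from $w$. At the fixed point $\rho$ of the former we have $\rho\,w = \dsem{\sLam y{e'}}\rho$, whence $\dsem{\sApp w x}\rho = \sFnProj{(\rho\,w)}{(\rho\,x)} = \dsem{e'}{\rho\sqcup\{y\mapsto\rho\,x\}} = \dsem{e'[x/y]}\rho$ by the semantics of $\sFn\cdot$ and the substitution lemma; so the binding of $z$ may be rewritten by the congruence lemma. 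After this rewrite $w$ no longer occurs in any other binding and may be discarded by the dead-variable lemma. The delicate point throughout is that $\esem\cdot{}$ is a least fixed point, so each of these rewrites is really the assertion that two separately defined least fixed points coincide; establishing the congruence and dead-variable lemmas at that level of generality — rather than for a single unfolding — is the technical heart of the proof, and the place where the machine-checked development does most of its work.
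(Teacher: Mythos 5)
Your proposal is correct and follows essentially the same route as the paper's proof: induction on the derivation, with \sRule{Lam} by reflexivity, \sRule{Let} by let-unfolding (Lemma \ref{lem:let_unfold}), \sRule{Var} by rewriting the indirection $z \mapsto x$ to $z \mapsto v$ (Lemma \ref{lem:var_var_subst}, sound exactly because the base environment is $\rho_\bot$), and \sRule{App} by introducing the fresh indirection $w$, chaining the two induction hypotheses, beta-reducing via the substitution lemma, and discarding $w$ (Lemmas \ref{lem:addvar} and \ref{lem:exp_var_subst}). One statement to tighten: your congruence lemma is false if the denotations of $e_1$ and $e_2$ are only required to agree under \emph{one} of the two fixed points (consider $z \mapsto \sLam x x$ versus $z \mapsto z$, whose denotations agree under the fixed point of the first heap but not under that of the second); the hypothesis must hold under both, as in Lemma \ref{lem:esem_subst_expr} --- though in each of your three uses the witnessing binding ($w \mapsto e$, $w \mapsto \sLam y {e'}$, $x \mapsto v$ respectively) is present in both heaps, so the same one-line argument applies on either side and nothing in your proof actually breaks.
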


As this proof is one of the main contributions of this paper, we spell it out in greater detail, and use the technical lemmas found in appendix \ref{sec:denprops}.

\begin{proof}
by induction on the derivation of $\ssred \Gamma {\Gamma'} {\Delta} {\Delta'}$.

\case{\sRule{Lam}}
We need to show $\esem{\Gamma, \Gamma'} \preceq \esem{\Gamma, \Gamma'}$, which holds trivially.

\case{\sRule{App}}
We have
\[
\ssred{\Gamma}{w \mapsto e, z \mapsto \sApp w x, \Gamma'}{\Delta}{w \mapsto \sLam y e', z \mapsto \sApp w x, \Delta'}
\text{ and }
\ssred{\Delta}{z \mapsto e'[x/y], \Delta'}{\Theta}{\Theta'},
\]
so by the induction hypothesis,
\[
\esem{\Gamma, w \mapsto e, z \mapsto \sApp w x, \Gamma'} \preceq \esem{\Delta, w \mapsto \sLam y e', z \mapsto \sApp w x, \Delta'}
\]
and
\[
\esem{\Delta,z \mapsto e'[x/y], \Delta'} \preceq \esem{\Theta,\Theta'}.
\]
We need to show $\esem{\Gamma, z \mapsto \sApp e x, \Gamma'} \preceq \esem{\Theta,\Theta'}$:
\begin{align*}
\esem{\Gamma, z \mapsto \sApp e x, \Gamma'} &= \esem{\Gamma, w \mapsto e, z\mapsto \sApp e x, \Gamma'} \setminus \{w\} \\
\aexpl{adding a fresh variable, Lemma \ref{lem:addvar}} \\
&= \esem{\Gamma, w \mapsto e, z \mapsto \sApp w x, \Gamma'} \setminus \{w\} \\
\aexpl{substituting the indirection, Lemma \ref{lem:exp_var_subst}} \\
&\preceq \esem{\Delta, w \mapsto \sLam y e', z \mapsto \sApp w x, \Delta'} \setminus \{w\}\\
\aexpl{by the induction hypothesis}  \displaybreak[1]\\
&= \esem{\Delta, w \mapsto \sLam y e', z \mapsto \sApp (\sLam y e') x, \Delta'} \setminus \{w\} \\
\aexpl{substituting the indirection again, Lemma \ref{lem:exp_var_subst}} \displaybreak[1]\\
&= \esem{\Delta, z \mapsto \sApp (\sLam y e') x, \Delta'}\\
\aexpl{removing the fresh variable again, Lemma \ref{lem:addvar}} \displaybreak[1]\\
&= \esem{\Delta, z \mapsto e'[x/y], \Delta'} \\
\aexpl{semantics of application} \\
&\preceq \esem{\Theta,\Theta'}\\
\aexpl{by the induction hypothesis.}
\end{align*}

\case{\sRule{Var}}
We have 
\[
\esem{\Gamma, x \mapsto e, z\mapsto x, \Gamma'} \preceq \esem{\Delta, x \mapsto v, z\mapsto x, \Gamma'}
\]
by the induction hypothesis. Using Lemma \ref{lem:var_var_subst} on the right hand side, we obtain
\[
\esem{\Gamma, x\mapsto e,z \mapsto x, \Gamma'} \preceq \esem{\Delta, x \mapsto v, z \mapsto v, \Gamma'}.
\]

\case{\sRule{Let}}
We have
\begin{align*}
\beginright
\esem{\Gamma,z \mapsto \sLet{\xeng}e, \Gamma'}\\
&\preceq \esem{\Gamma,\xen, z \mapsto e, \Gamma'} \\
\aexpl{by unfolding the let-expression, Lemma \ref{lem:let_unfold}} \\
&\preceq \esem{\Delta,\Delta'}\\
\aexpl{by the induction hypothesis}
\end{align*}
\end{proof}

From the correctness of the stacked semantics we can easily obtain the correctness of the original semantics:

\begin{proof}[of Theorem \ref{thm:main}]
From $\sred \Gamma e \Delta v$, we have $\ssred{\Gamma}{z \mapsto e}{\Delta}{z\mapsto v}$ for a fresh $z$ by Theorem \ref{thm:equiv}. By the theorem just shown, we have $\esem{\Gamma, z \mapsto e} \preceq \esem{\Delta, z\mapsto v}$. This implies $\dsem{e}{\esem{\Gamma, z \mapsto e}} = \dsem{v}{\esem{\Delta, z\mapsto v}}$ by Lemma \ref{lem:esem_this}. As $z$ is fresh, by Lemma \ref{lem:see_through_fresh} we have $\dsem{e}{\esem{\Gamma}} = \dsem{v}{\esem{\Delta}}$.
\end{proof}

\section{The update-based semantics}
\label{updsemantics}

We have found another way to fix Launchbury's correctness proof: We modify the denotational semantics of heaps to be
\begin{align*}
\esemu{ \Gamma }\rho &= \mu \rho'. \rho + \dsemu{\Gamma}{\rho'},
\end{align*}
where
\[
(\rho + \dsemu{\Gamma}{\rho'})\, x = 
\begin{cases}
(\dsemu{\Gamma}{\rho'})\, x, &\text{if } x \in \dom \Gamma\\
\rho\, x &\text{otherwise},
\end{cases}
\]
i.e. we replace the least upper bound operator by a right-sided update, and otherwise let $\dsemu{\_}\_$ be defined by the same equations as $\dsem{\_}\_$.

Interestingly, the denotational semantics of expressions are the same under both definitions:
\begin{lemma}
For all $e$ and $\rho$, $\dsem{e}{\rho} = \dsemu{e}{\rho}$.
\label{lem:deneq}
\end{lemma}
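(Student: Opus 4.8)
The plan is to prove $\dsem{e}{\rho} = \dsemu{e}{\rho}$ by structural induction on the expression $e$, treating the two semantics as functions defined by the same clauses except at the heap-interpretation step inside \textsc{Let}. The key observation is that the two expression semantics are \emph{defined by literally the same equations}; the only place where the definitions could possibly diverge is the \textsc{Let} case, which unfolds through the heap semantics $\esem{\_}\_$ versus $\esemu{\_}\_$. So the cases for variables, lambda abstractions, and applications should go through by a direct appeal to the induction hypothesis: for a variable $x$ both sides are $\rho\,x$; for $\sLam x{e}$ both sides are $\sFn{\lambda v.\dsem{e}{\rho\sqcup\{x\mapsto v\}}}$ with the inner denotation rewritten by the IH; and for $\sApp e x$ both sides project $\dsem e\rho$ applied to $\dsem x\rho$, again closed by the IH.

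First I would reduce the whole statement to an auxiliary claim about the two heap semantics, namely that for every heap $\Gamma$ and every environment $\rho$ we have $\esem{\Gamma}{\rho} = \esemu{\Gamma}{\rho}$, at least on the variables that matter. Granting this, the \textsc{Let} case becomes
\[
\dsemu{\sLet{\xeng}e}{\rho} = \dsemu{e}{\esemu{\xeng}{\rho}} = \dsem{e}{\esem{\xeng}{\rho}} = \dsem{\sLet{\xeng}e}{\rho},
\]
where the middle equality uses the heap claim together with the IH applied to $e$.

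The real work is therefore the heap-semantics claim, and this is where I expect the main obstacle to lie, because $\esem{\Gamma}{\rho}$ and $\esemu{\Gamma}{\rho}$ are defined as \emph{least fixed points of different functionals}: one uses $\rho \sqcup \dsem{\Gamma}{\rho'}$, the other uses the right-sided update $\rho + \dsemu{\Gamma}{\rho'}$. These two environments genuinely differ off the domain of $\Gamma$ — on a variable $x\notin\dom\Gamma$ the update gives $\rho\,x$ while the join gives $\rho\,x\sqcup\bot = \rho\,x$ (so they agree there), whereas on $x\in\dom\Gamma$ the join produces $\rho\,x \sqcup (\dsem{\Gamma}{\rho'})\,x$ but the update discards $\rho\,x$ entirely. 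This is exactly the discrepancy that broke Launchbury's proof. The argument must show that this discrepancy is invisible \emph{at the level of the fixed point}: intuitively, because the denotations computed inside $\Gamma$ never actually depend on the $\rho$-contribution to a variable that $\Gamma$ itself rebinds, the extra $\rho\,x$ summand in the join is subsumed. I would try to make this precise by a mutual/simultaneous induction that strengthens the statement so that the IH for $e$ and the heap claim are available to each other, and by comparing the two fixed-point iterations stage by stage using the expression-level IH to identify $\dsem{\Gamma}{\rho'} = \dsemu{\Gamma}{\rho'}$ at each approximant.

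The cleanest route, and the one I would attempt first, is to fold the heap claim into the same induction as the expression claim: prove by a single well-founded induction (on the size of $e$, with heaps handled via their list of bound expressions) the conjunction ``$\dsem{e}{\rho}=\dsemu{e}{\rho}$ for all $\rho$'' together with ``$\esem{\Gamma}{\rho}=\esemu{\Gamma}{\rho}$ for all $\rho$''. The fixed-point equality would then follow from uniqueness of least fixed points once the two functionals are shown to coincide wherever the fixed point is nonbottom, which is precisely the content of the subsumption argument above; monotonicity and continuity of the functionals (already needed for the definitions to make sense) let me transport agreement through $\mu$. The subtle point to get right is that the equality of functionals holds only relative to environments in the image of the fixed-point construction, not for arbitrary $\rho'$, so the induction hypothesis has to be phrased carefully enough to feed itself — that bookkeeping, rather than any deep inequality, is the crux.
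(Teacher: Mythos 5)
Your induction skeleton---structural induction on $e$, all cases immediate except \sRule{Let}, which is reduced to a comparison of the two heap semantics---is the same as the paper's. But the auxiliary claim you reduce to, namely $\esem{\Gamma}\rho = \esemu{\Gamma}\rho$ ``for every heap $\Gamma$ and every environment $\rho$,'' is false, and your justification for it (``the extra $\rho\,x$ summand in the join is subsumed'') is precisely the fallacy that broke Launchbury's original proof, i.e.\ the subject of this paper. Concretely, take $\Gamma = (x \mapsto \sLam{a}{\sLet{b = b}{b}})$ and $\rho = (x \mapsto \sFn{\lambda \_.\, \sFn{\lambda y. y}})$, as in the paper's counter example (Section~\ref{counterexample}); then
\[
(\esemu{\Gamma}\rho)\,x = \sFn{\lambda \_.\, \bot}
\qquad\text{but}\qquad
(\esem{\Gamma}\rho)\,x = \rho\,x \sqcup \sFn{\lambda \_.\, \bot} = \sFn{\lambda \_.\, \sFn{\lambda y. y}},
\]
so the join does not discard the $\rho$-contribution and the two fixed points genuinely differ on $\dom\Gamma$. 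Your hedge ``at least on the variables that matter'' does not rescue the claim: the variables on which the two heap denotations differ are exactly the let-bound variables, which occur free in the let body, so they are precisely the variables that matter. Nor does your proposed repair---restricting attention to $\rho'$ ``in the image of the fixed-point construction''---address the failure: once the expression-level induction hypothesis identifies $\dsem{\Gamma}{\rho'} = \dsemu{\Gamma}{\rho'}$, the two functionals are $\lambda \rho'.\, \rho \sqcup \dsem{\Gamma}{\rho'}$ and $\lambda \rho'.\, \rho + \dsem{\Gamma}{\rho'}$, and their disagreement at $x \in \dom\Gamma$ is caused by the \emph{outer} $\rho$, uniformly in $\rho'$; no bookkeeping about the inner iterates can make a statement true that is already false because of $\rho$.

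The idea you are missing is a freshness side condition. The heap equality does hold---and is then easy---when $\dom\Gamma \cap \dom\rho = \emptyset$, and that is all the \sRule{Let} case needs, because the let-bound variables $x_1,\ldots,x_n$ are fresh, in particular fresh with regard to $\rho$ (the paper's naming convention; formally one alpha-renames the binders, under which denotations are invariant). Under this disjointness, for $x \in \dom\Gamma$ we have $\rho\,x = \bot$, so the join functional returns $(\dsem{\Gamma}{\rho'})\,x$ there, while off $\dom\Gamma$ both functionals return $\rho\,x$; combined with $\dsem{e_i}{\rho'} = \dsemu{e_i}{\rho'}$ for all $\rho'$ (induction hypothesis, since each $e_i$ is a subterm of the let-expression), the two functionals are literally equal, hence so are their least fixed points. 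No mutual induction, stage-by-stage comparison of approximants, or continuity argument is needed. This is exactly the paper's proof: it uses that $\esem{\Gamma}\rho = \esemu{\Gamma}\rho$ \emph{if the domains of $\Gamma$ and $\rho$ are distinct}, which holds in rule \sRule{Let} because the variables introduced on the heap are fresh.
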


\begin{proof}
by induction on $e$. The interesting case is \sRule{Let}, where we use that $\esem{\Gamma}{\rho} = \esemu{\Gamma}{\rho}$ if the domains of $\Gamma$ and $\rho$ are distinct, which is the case as the variables introduced on the heap in rule \sRule{Let} are fresh.
\end{proof}

In the following we will only mention $\dsem{\_}\_$.

\subsection{Correctness}

Using the modified denotational semantics we can state `Theorem' \ref{thm:false} as a theorem and prove it:
\begin{theorem}
If $\sred \Gamma e \Delta v$ holds, then for all environments $\rho \in \sEnv$, $\dsem{e}{\esemu{\Gamma}{\rho}} = \dsem{v}{\esemu{\Delta}{\rho}}$ and $\esemu\Gamma\rho \preceq \esemu\Delta\rho$.%
\label{thm:thm2}
\end{theorem}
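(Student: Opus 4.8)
The plan is to follow Launchbury's original induction on the derivation of $\sred{\Gamma}{e}{\Delta}{v}$, now carried out with the update-based heap semantics $\esemu{\_}{\_}$ in place of $\esem{\_}{\_}$. The statement already carries the generalisation over all $\rho \in \sEnv$, but under $+$ this is no longer an obstruction, which is the entire point of the modification. Throughout I maintain the two conjuncts simultaneously, the denotational equality $\dsem{e}{\esemu{\Gamma}{\rho}} = \dsem{v}{\esemu{\Delta}{\rho}}$ and the ordering $\esemu{\Gamma}{\rho} \preceq \esemu{\Delta}{\rho}$, and I use $\dsem{\_}{\_}$ for expression denotations throughout, which is justified by Lemma~\ref{lem:deneq}. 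The two conjuncts are interdependent: in \sRule{App} and \sRule{Let} the ordering follows by transitivity of $\preceq$ from the hypotheses (and, for \sRule{Let}, from the unfolding lemma), whereas in \sRule{Var} the ordering at the freshly restored variable is precisely the denotational equality proved in that case, so both parts must be established together.

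The cases \sRule{Lam}, \sRule{Let} and \sRule{App} run exactly as in the correctness proof for the stacked semantics. \sRule{Lam} is immediate, as both sides coincide and $\preceq$ is reflexive. For \sRule{Let} I unfold the bindings with the analogue of Lemma~\ref{lem:let_unfold} for $\esemu{\_}{\_}$, using that the introduced variables are fresh so that extending the heap only enlarges the domain, and close with the induction hypothesis. For \sRule{App} I expand $\dsem{\sApp{e}{x}}{\esemu{\Gamma}{\rho}}$ to $\sFnProj{\dsem{e}{\esemu{\Gamma}{\rho}}}{\dsem{x}{\esemu{\Gamma}{\rho}}}$, rewrite the function part by the first hypothesis to $\dsem{\sLam{y}{e'}}{\esemu{\Delta}{\rho}}$, perform the $\beta$-style projection and fold it into $\dsem{e'[x/y]}{\esemu{\Delta}{\rho}}$ by the substitution identity, and finish with the second hypothesis; the argument variable $x$ is matched across $\Gamma$ and $\Delta$ using the ordering conjunct (it is bound in $\Gamma$ by closedness of configurations).

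The decisive case is \sRule{Var}, where the repair takes effect. From the premise $\sred{\Gamma}{e}{\Delta}{v}$ we must prove the conclusion for $\sred{\Gamma, x\mapsto e}{x}{\Delta, x\mapsto v}{v}$, i.e.\ $\dsem{x}{\esemu{\Gamma, x\mapsto e}{\rho}} = \dsem{v}{\esemu{\Delta, x\mapsto v}{\rho}}$. The left-hand side is $(\esemu{\Gamma, x\mapsto e}{\rho})\,x$, and since $x \in \dom{(\Gamma, x\mapsto e)}$ the update-based definition gives $(\esemu{\Gamma, x\mapsto e}{\rho})\,x = \dsem{e}{\esemu{\Gamma, x\mapsto e}{\rho}}$ with no residual $\rho\,x \sqcup (\cdots)$ term. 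This is exactly the identity that fails for the least-upper-bound semantics and that the counterexample exploited, and it is the single place where Launchbury's argument was unsound; under $+$ it holds verbatim. I then instantiate the universally $\rho$-quantified first hypothesis at the full heap environment $\esemu{\Gamma, x\mapsto e}{\rho}$, which is legitimate precisely because no restriction on $\rho$ is needed, and, using that this environment is already a fixed point of the $\Gamma$-equations, obtain $\dsem{e}{\esemu{\Gamma, x\mapsto e}{\rho}} = \dsem{v}{\esemu{\Delta}{\esemu{\Gamma, x\mapsto e}{\rho}}}$. It remains to reconcile $\esemu{\Delta}{\esemu{\Gamma, x\mapsto e}{\rho}}$ with $\esemu{\Delta, x\mapsto v}{\rho}$ on the free variables of $v$, using the ordering conjunct of the hypothesis and the freshness of $x$ (so $x \notin \dom\Gamma \cup \dom\Delta$).

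The main obstacle is exactly this final reconciliation. The two environments arise from different least fixed points, and matching them on the free variables of $v$ — in particular on $x$ itself, which $v$ may mention through a recursive binding — reduces to the very equality being proved and is discharged by a least-fixed-point argument in the $x$-coordinate. This is also where Launchbury's use of $\le$ is unsound, as discussed for \sRule{Var} in Section~\ref{lealternatives}; replacing $\le$ by $\preceq$ and tracking domains explicitly is what lets the reconciliation go through, and it simultaneously supplies the ordering conjunct at $x$. Before all of this I would re-establish, now for the right-sided update $+$ rather than $\sqcup$, the handful of algebraic facts the manipulation relies on — idempotency of $\esemu{\_}{\_}$, the substitution identity used in \sRule{App}, let-unfolding, and well-definedness of every $+$-update invoked — after which the remaining steps are the routine rewriting sketched above.
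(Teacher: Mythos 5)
Your overall structure (induction on the derivation, with \sRule{Lam}, \sRule{App} via the substitution identity, and \sRule{Let} via merging fresh bindings) matches the paper's proof, and those cases are sound. The genuine gap is in \sRule{Var}, the case you rightly call decisive. Your plan is to instantiate the universally quantified induction hypothesis at the single environment $\rho'' = \esemu{\Gamma, x\mapsto e}{\rho}$, obtaining (after an idempotency step) $\dsem{e}{\rho''} = \dsem{v}{\esemu{\Delta}{\rho''}}$, and then to ``reconcile'' $A = \esemu{\Delta}{\rho''}$ with $B = \esemu{\Delta, x\mapsto v}{\rho}$. That reconciliation is not a routine fixed-point manipulation: it \emph{is} the theorem. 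Concretely, one direction is fine --- $B \sqsubseteq A$ holds because $A$ is a pre-fixed point of the functorial defining $B$, using exactly your identity $A\,x = \dsem{e}{\rho''} = \dsem{v}{A}$. But the other direction requires, at the coordinate $x$, that $\dsem{v}{A} \sqsubseteq \dsem{v}{B}$, which by monotonicity you would deduce from $A \sqsubseteq B$ --- the very thing being proved. Scott induction on the iterates of $A$ does not break this circle, because the problematic value $\rho''\,x = \dsem{v}{A}$ sits in the \emph{base} environment of $A$'s fixed point, not in the iterated part, so it never decreases along the chain; nor can the induction hypothesis be applied to the iterates of $\rho''$ directly, since they are not of the form $\esemu{\Gamma}{\rho'}$. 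Your phrase ``discharged by a least-fixed-point argument in the $x$-coordinate'' names precisely the step you have not supplied.

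The paper's proof never instantiates the hypothesis at a single environment. The whole point of Lemma~\ref{lem:iter} is to rewrite $\esemu{x\mapsto e, \Gamma}{\rho}$ as $\mu \rho'.\, \rho + (\esemu{\Gamma}{\rho'})|_{\dom\Gamma} + (x \mapsto \dsem{e}{\esemu{\Gamma}{\rho'}})$, i.e.\ as a fixed point whose functorial mentions $\dsem{e}{\esemu{\Gamma}{\rho'}}$ as a function of the \emph{bound} variable $\rho'$. The universally quantified induction hypothesis is then applied under the $\mu$-binder --- for every $\rho'$, not at one chosen environment --- rewriting the functorial to use $x \mapsto \dsem{v}{\esemu{\Delta}{\rho'}}$; the ordering conjunct of the hypothesis together with monotonicity of $\mu$ with respect to $\preceq$ compares the two resulting fixed points, and a second application of Lemma~\ref{lem:iter} folds the result into $\esemu{x\mapsto v, \Delta}{\rho}$. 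Note also that the paper derives the equality conjunct \emph{from} the ordering conjunct (possible since $x \in \dom(x\mapsto e, \Gamma)$), whereas you attempt to prove the equality first and let it supply the ordering at $x$; that inverted order of battle is where your circularity enters. To repair your proof you would need to state and prove an analogue of Lemma~\ref{lem:iter} (it is also the one fact the paper flags as stated without proof by Launchbury) and restructure the \sRule{Var} case around it.
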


Our proof follows Launchbury's steps quite closely, the only differences are the use of $\preceq$ instead of $\le$ and the slightly different iterative fixed-point expression in case \sRule{Var}. Nevertheless we reproduce it here for completeness and clarifying details. The required technical lemmas about the denotational semantics are compiled in appendix \ref{updsemanticsprops}.

\begin{proof}
This is essentially the proof in \cite{launchbury}, which proceeds by induction on the derivation of $\sred \Gamma e \Delta v$.

\case{\sRule{Lam}}
This case is trivial.

\case{\sRule{App}}
By the induction hypothesis we know
$\dsem{e}{\esemu{\Gamma}\rho} = \dsem{\sLam y {e'}}{\esemu{\Delta}\rho}$ and $\esemu{\Gamma}\rho \preceq \esemu{\Delta}\rho$ as well as $\dsem{e'[x/y]}{\esemu{\Delta}\rho} = \dsem{v}{\esemu{\Theta}\rho}$ and $\esemu{\Delta}\rho \preceq \esemu{\Theta}\rho$.

While the second part follows from the transitivity of $\preceq$, the first part is a simple calculation:
\begin{align*}
\dsem{\sApp{e}{x}}{\esemu{\Gamma}\rho} &= \sFnProj{\dsem{e}{\esemu{\Gamma}\rho}}{\dsem{x}{\esemu{\Gamma}{\rho}}}\\
\aexpl{by the denotation of application} \\
&= \sFnProj{\dsem{\sLam y {e'}}{\esemu{\Delta}\rho}}{\dsem{x}{\esemu{\Gamma}{\rho}}}\\ 
\aexpl{by the induction hypothesis} \\
&= \sFnProj{\dsem{\sLam y {e'}}{\esemu{\Delta}\rho}}{\dsem{x}{\esemu{\Delta}{\rho}}}\\ 
\aexpl{by the induction hypothesis and the definition of $\preceq$} \\
&= \dsem{e'}{(\esemu{\Delta}\rho)(y \mapsto \dsem{x}{\esemu{\Delta}{\rho}})}\\ 
\aexpl{by the denotation of lambda abstraction} \\
&= \dsem{e'[x/y]}{\esemu{\Delta}\rho}\\ 
\aexpl{by substitution Lemma \ref{lem:subst}} \\
&= \dsem{v}{\esemu{\Theta}\rho}\\
\aexpl{by the induction hypothesis}
\end{align*}

\case{\sRule{Var}}
We know that $\dsem{e}{\esem{\Gamma}\rho'}=\dsem{v}{\esem{\Delta}\rho'}$ and $\esem{\Gamma}\rho' \preceq \esem{\Delta}\rho'$ for all $\rho'\in\sEnv$.

We begin with the second part:
\begin{align*}
\esemu{x \mapsto e,\Gamma}{\rho} &= \mu \rho'.\,\rho + (\esemu{\Gamma}{\rho'})|_{\dom\Gamma} + (x \mapsto \dsem{e}{\esemu{\Gamma}\rho'}) \\
\aexpl{by Lemma \ref{lem:iter}} \\
&= \mu \rho'.\, \rho + (\esemu{\Gamma}{\rho'})|_{\dom\Gamma} + (x \mapsto \dsem{v}{\esemu{\Delta}\rho'}) \\
\aexpl{\parbox{\widthof{by the induction hypothesis. Note that}}{\raggedright by the induction hypothesis. Note that we invoke it for $\rho'$ with $\rho' \ne \rho$!}} \\
&\preceq \mu \rho'.\, \rho + (\esemu{\Delta}{\rho'})|_{\dom\Delta} + (x \mapsto \dsem{v}{\esemu{\Delta}\rho'}) \\
\aexpl{by induction and the monotonicity of $\mu$ with regard to $\preceq$} \\
&= \esemu{x \mapsto v, \Delta}{\rho}\\
\aexpl{by Lemma \ref{lem:iter}}
\end{align*}

The first part now follows from the second part:
\begin{align*}
\dsem{x}{\esemu{x \mapsto e, \Gamma}\rho} &=
(\esemu{x \mapsto e, \Gamma}\rho)\,x &
\aexpl{by the denotation of variables}\\
&= (\esemu{x \mapsto v, \Delta}\rho)\,x &
\aexpl{by the first part and $x\in\dom(x\mapsto e, \Gamma)$}\\
&=\dsem{x}{\esemu{x \mapsto v, \Delta}\rho}&
\aexpl{by the denotation of variables}\\
&= \dsem{v}{\esemu{x \mapsto v, \Delta}\rho}.
\end{align*}

\case{\sRule{Let}}
We know that $\dsem{e}{\esem{\Gamma, \xen}\rho} = \dsem{v}{\esem{\Delta}{\rho}}$ and $\esem{\Gamma, \xen}\rho \preceq \esem{\Delta}{\rho}$.
For the first part we have
\begin{align*}
\beginright
\dsem{\sLet{\xeng}e}{\esem{\Gamma}\rho} \\
&= \dsem{e}{\esem{\xen}{\esem{\Gamma}\rho}} &
\aexpl{by the denotation of let-expressions} \\
&= \dsem{e}{\esem{\Gamma, \xen}{\rho}} &
\aexpl{by Lemma \ref{lem:esem-merge}} \\
&= \dsem{v}{\esem{\Delta}{\rho}} &
\aexpl{by the induction hypothesis}
\intertext{and for the second part we have}
\esem{\Gamma}{\rho} 
&\preceq \esem{\xen}{\esem{\Gamma}\rho} &
\aexpl{because the $x_1,\ldots,x_n$ are fresh} \\
&\preceq \esem{\Gamma, \xen}\rho &
\aexpl{by Lemma \ref{lem:esem-merge}} \\
&\preceq \esem{\Delta}\rho. &
\aexpl{by the induction hypothesis}
\end{align*}
\end{proof}

\section{Related work}

A large number of developments on formal semantics of functional programming languages in the last two decades build on Launchbury’s work. Many of them implicitly or explicitly rely on the correctness proof as spelled out by Launchbury:

Van Eekelen \& de~Mol \shortcite{mixed} add strictness annotations to the syntax and semantics of Launchbury’s work. They state the correctness as in `Theorem' \ref{thm:false}, without spotting the issue.

Nakata \& Hasegawa \shortcite{nakata} define a small-step semantics for call-by-need and relate it to a Launchbury-derived big-step semantics. They state the correctness with respect to the denotational semantics and reproduce the `Theorem' \ref{thm:false} in the flawed form, in their extended version.

S{\'a}nchez-Gil {\em et~al.} \shortcite{distributed} extend Launchbury's semantics with distributed evaluation.  In their modified natural semantics the heap retains the expression under evaluation with a special flag, marking them as blocked. Furthermore, the expression under evaluation has a name. Thus the non-distributed subset of their semantics is very similar to our stacked semantics and their correctness statement corresponds to Theorem~\ref{thm:thm2}.

An interesting case is the work by Baker-Finch {\em et~al.} on parallel call-by-need: While an earlier report \shortcite{parallel-tr} uses Launchbury’s definitions unmodified and states the flawed `Theorem' \ref{thm:false}, the following publication at ICFP \shortcite{parallel} uses an update-based denotational semantics, unfortunately without motivating that change.

Similarly, Nakata  \shortcite{nakata_blackhole}, who modifies the denotational semantics to distinguish direct cycles from looping recursion, uses update-based semantics without further explanation. 

This list is just a small collection of many more Launchbury-like semantics. Often the relation to a denotational semantics is not stated, but nevertheless they are standing on the foundations laid by Launchbury. Therefore it is not surprising that others have worked on formally fortifying these foundations as well:

S{\'a}nchez-Gil {\em et~al.} identified a step in his adequacy proof relating the standard and the resourced denotational semantics that is not as trivial as it seemed at first and worked out a detailed pen-and-paper proof \shortcite{functionspaces}. They also plan to prove the equivalency between Launchbury’s natural semantics and a variant thereof, which was used by Launchbury in his adequacy proof, in the theorem prover Coq.

As one step in that direction, they address the naming issues and suggest a mixed representation, using de Bruijn indices for locally bound variables and names for free variables \shortcite{nameless}. This corresponds to our treatment of names in the formal development, using the Nominal logic machinery \cite{nominal} locally but not for names bound in heaps.

They also proved that the natural semantics with a modified \sRule{App} that uses an indirection on the heap instead of substitution, can be proven equivalent to the original semantics \cite{indirections}. This is one step towards showing Launchbury's alternative natural semantics equivalent to the original.

\section{Discussion}

Although we have found a flaw in the proof and the formulation of the correctness theorem in Launchbury’s semantics, the essential correctness result (as formulated in Theorem~\ref{thm:main}) stills holds. In that sense our work, especially with the computer-verified proofs in \cite{afp}, actually strengthens the foundations of formal work in that field.

We have provided two ways to fix the problem: One by adding a stack to the operational semantics, and one by changing the denotation of heaps. Both variants have precursors in the literature: The extended heaps of \cite{distributed} resemble the heap-stack-pairs of our semantics, while Baker-Finch {\em et~al.} \shortcite{parallel} and Nakata \shortcite{nakata_blackhole} use right-sided updates in their denotational heap semantics.

Adding the stack to the judgments of the natural semantics, although it does not affect the evaluation of expressions at all, seems to make the semantics more suitable for formal proofs than the original semantics:
\begin{compactitem}
\item Launchbury’s global notion of “fresh variable” is hard to work with, as noted and fixed before by adding to the judgments an explicit list of variable names to avoid \cite{sestoft}. As the stack in our semantics already contains these names, this additional step is not required.
\item While proving Theorem \ref{thm:stackedcorrectness}, the environment $\rho$ does not have to be all-quantified in the inductive step. This simplified the proof a bit and avoids the pitfall that the original proof fell into.
\end{compactitem}
The similarity with the extended heaps in \cite{distributed} further supports the usefulness of the stacked semantics.

We had to spend slightly more pages on the properties of the original denotational semantics (Appendix~\ref{sec:denprops}) than of the update-based denotational semantics (Appendix~\ref{updsemanticsprops}). This is partly due to more details in the proofs in the former section, partly because $+$ behaves nicer than $\sqcup$ (e.g.\ it is defined everywhere). This is orthogonal to the usefulness of the stack-based natural semantics, which could be proven correct with regard to the updates-based denotational semantics as well.

Unless one has to stick with $\sqcup$, e.g. to stay compatible with previous work using this definition, little stands in the way of using the update-based denotational semantics to model the denotation of heaps.

\section{Future work}

Proving correctness is of course only half the battle: The adequacy of Launchbury’s semantics is not yet formally proven. The original paper itself outlines the steps of a proof, and some of these steps have since then been spelled out in greater detail \cite{functionspaces}, and the same authors are currently working on the equivalency of Launchbury’s original and modified natural semantics \cite{indirections}.

We however found that the adequacy is shown easier and more elegantly by taking
care of indirections and blackholing on the denotational side, and have a
complete and machine-checked proof (yet to be published) of that.

We hope that by proving the correctness and adequacy in a theorem prover, we not “just” reinforce our theoretical foundations but also create a practical (for a theoretician’s understanding of practical) tool that can be used to experiment with the many various ways that this semantics can be extended and modified.

\section*{Acknowledgments}

I would like to thank Andreas Lochbihler and Denis Lohner for careful proof-reading and very constructive comments. Furthermore I'd like to thank the anonymous referees at the Journal of Functional Programming for further typo spotting. This work was supported by the Deutsche Telekom Stiftung.

\bibliographystyle{amsalpha}
\bibliography{\jobname}

\appendix

\section{Appendix}
\subsection{Properties of the denotational semantics}
\label{sec:denprops}

This section collects the various technical lemmas about the denotational semantics that we need in the proof for the correctness of the stacked semantics (Theorem~\ref{thm:stackedcorrectness}).

The following two lemmas give the result of looking up a variable $x$ in the denotation of a heap $\esem\Gamma\rho$, depending on whether $x$ is bound in the heap or not. 

\begin{lemma}
\label{lem:esem_this}
If $x \mapsto e \in \Gamma$, then $(\esem{\Gamma}\rho)\, x = \rho\, x \sqcup \dsem{e}{\esem{\Gamma}\rho}$. In particular, $(\esem{\Gamma})\, x = \dsem{e}{\esem{\Gamma}}$.
\end{lemma}

\begin{proof}
by unfolding the fixed point once.
\end{proof}

\begin{lemma}
\label{lem:esem_other}%
\label{lem:remove}
If $x \notin \dom \Gamma$, then $(\esem{\Gamma}\rho)\, x = \rho\, x$.
In particular, if $\dom \Gamma \cap \dom \rho = \emptyset$, then $(\esem{\Gamma}\rho) \setminus \dom\Gamma = \rho$.
\end{lemma}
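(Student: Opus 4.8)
The plan is to exploit the fixed-point characterisation of $\esem{\Gamma}\rho$ together with the fact that the ``local'' contribution $\dsem{\xen}{\rho'}$ of the heap has domain exactly $\dom\Gamma$. Throughout I write $\Gamma = (\xen)$.

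First I would note that, since $\esem{\Gamma}\rho = \mu\rho'.\,\rho \sqcup \dsem{\xen}{\rho'}$ is a fixed point (the required least upper bounds exist, as the paper assumes), we may unfold it once to obtain $\esem{\Gamma}\rho = \rho \sqcup \dsem{\xen}{\esem{\Gamma}\rho}$, exactly as in the proof of Lemma~\ref{lem:esem_this}. Evaluating both sides at $x$ and using that $\sqcup$ on environments is the pointwise least upper bound yields $(\esem{\Gamma}\rho)\,x = \rho\,x \sqcup (\dsem{\xen}{\esem{\Gamma}\rho})\,x$. Now $\dsem{\xen}{\rho'}$ denotes the finite environment $(x_1\mapsto\dsem{e_1}{\rho'},\ldots,x_n\mapsto\dsem{e_n}{\rho'})$ whose domain is $\dom\Gamma$; since $x\notin\dom\Gamma$, the second summand is $\bot$. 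Hence $(\esem{\Gamma}\rho)\,x = \rho\,x \sqcup \bot = \rho\,x$, which is the first claim.

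For the ``in particular'' part, recall that $\rho\setminus\dom\Gamma$ restricts $\rho$ to the complement of $\dom\Gamma$, so I would compare $(\esem{\Gamma}\rho)\setminus\dom\Gamma$ with $\rho$ pointwise. For $x\notin\dom\Gamma$ both sides agree by the first claim just shown. For $x\in\dom\Gamma$ the left-hand side is $\bot$ by definition of the restriction; and since the hypothesis $\dom\Gamma\cap\dom\rho=\emptyset$ forces $x\notin\dom\rho$, the convention that $\dom\rho$ collects precisely the variables \emph{not} mapped to $\bot$ gives $\rho\,x=\bot$ as well, so the two sides agree here too.

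The only delicate point is the bookkeeping about which environment contributes at $x$: one must be sure that $\dsem{\xen}{\cdot}$ really has domain $\dom\Gamma$ and sends everything outside it to $\bot$, and that the second part genuinely relies on the convention defining $\dom\rho$ as the non-$\bot$ variables of $\rho$. Both are immediate from the definitions, so I expect no real obstacle here beyond invoking the existence of the least upper bounds, which the excerpt already grants.
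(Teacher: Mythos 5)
Your proof is correct and follows exactly the paper's approach: the paper's entire proof is ``by unfolding the fixed point once,'' and your argument spells out precisely that unfolding, plus the routine pointwise check (using that $\dom\rho$ is the set of non-$\bot$ variables) for the ``in particular'' part.
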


\begin{proof}
by unfolding the fixed point once.
\end{proof}

The denotation of a heap $\esem\Gamma\rho$ is a refinement of the environment $\rho$, as shown by the next lemma.

\begin{lemma}
$\rho \sqsubseteq \esem{\Gamma}{\rho}$.
\label{lem:rho_below_esem}
\end{lemma}

\begin{proof}
This follows from the fixed-point equation $\esem{\Gamma}{\rho} = \rho \sqcup \dsem{\Gamma}{\esem{\Gamma}{\rho}}$.
\end{proof}

We often need to show that the denotation of a heap is less defined than an environment, and usually do this using the following lemma.
\begin{lemma}
If $\rho \sqsubseteq \rho^*$ and $\dsem{\Gamma}{\rho^*} \sqsubseteq \rho^*$, then $\esem{\Gamma}\rho \sqsubseteq \rho^*$.
\label{lem:esem_below}
\end{lemma}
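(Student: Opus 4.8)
The plan is to exhibit $\rho^*$ as a pre-fixed point of the functional defining $\esem{\Gamma}\rho$ and then appeal to the defining property of the least-fixed-point operator. Recall that $\esem{\Gamma}\rho = \mu\rho'.\,\rho \sqcup \dsem{\Gamma}{\rho'}$, i.e.\ it is the least fixed point of the monotone functional $F(\rho') \coloneqq \rho \sqcup \dsem{\Gamma}{\rho'}$. The operator $\mu$ returns the least \emph{pre-fixed point}: whenever $F(\rho^*) \sqsubseteq \rho^*$, we have $\mu F \sqsubseteq \rho^*$. So it suffices to check that $\rho^*$ is a pre-fixed point of $F$.

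To that end I would compute $F(\rho^*) = \rho \sqcup \dsem{\Gamma}{\rho^*}$ and show it lies below $\rho^*$. By the first hypothesis $\rho \sqsubseteq \rho^*$, and by the second $\dsem{\Gamma}{\rho^*} \sqsubseteq \rho^*$; since $\sqcup$ is the least upper bound, the join of two environments that are each below $\rho^*$ is itself below $\rho^*$. Hence $F(\rho^*) \sqsubseteq \rho^*$, and the least-pre-fixed-point property immediately yields $\esem{\Gamma}\rho = \mu F \sqsubseteq \rho^*$.

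There is essentially no hard step here; the only things to keep in mind are the standard domain-theoretic facts that $F$ is monotone (so that $\mu F$ is well defined and is the least pre-fixed point) and that the relevant least upper bounds exist, both of which are guaranteed by the setup discussed earlier. If one preferred to avoid invoking the least-pre-fixed-point property directly, one could instead argue via the Kleene iteration $\mu F = \bigsqcup_{n} F^n(\rho_\bot)$ and prove $F^n(\rho_\bot) \sqsubseteq \rho^*$ by induction on $n$, using $\rho_\bot \sqsubseteq \rho^*$ in the base case and combining monotonicity of $F$ with $F(\rho^*) \sqsubseteq \rho^*$ in the step; passing to the supremum then gives the claim. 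I expect the first, shorter route to be the intended one.
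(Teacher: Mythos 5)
Your proposal is correct and follows exactly the paper's own argument: $\esem{\Gamma}\rho$ is the least fixed point of the functional $\lambda\rho'.\,\rho \sqcup \dsem{\Gamma}{\rho'}$, hence its least pre-fixed point, and the two hypotheses make $\rho^*$ a pre-fixed point. Your version just spells out the join step (and offers a Kleene-iteration alternative) that the paper leaves implicit.
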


\begin{proof}
By definition, $\esem{\Gamma}\rho$ is the least fixed point of the functorial $\lambda \rho'. \rho \sqcup \dsem{\Gamma}{\rho'}$, and hence the least pre-fixed point. By assumption, $\rho^*$ is a pre-fixed point, so $\esem{\Gamma}\rho \sqsubseteq \rho^*$ holds.
\end{proof}

The following two lemmas provide a way to replace a binding in a heap.
\begin{lemma}
If $\dsem{e_1}{\esem{x_1 \mapsto e_2, \Gamma}\rho} \sqsubseteq \dsem{e_2}{\esem{x \mapsto e_2, \Gamma}\rho}$
then $\esem{x_1 \mapsto e_1, \Gamma}\rho \sqsubseteq \esem{x_1 \mapsto e_2, \Gamma}\rho$.
\label{lem:esem_subst_expr_below}
\end{lemma}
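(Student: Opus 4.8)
The plan is to apply Lemma~\ref{lem:esem_below} with the heap $x_1 \mapsto e_1, \Gamma$ and the candidate upper bound $\rho^* \coloneqq \esem{x_1 \mapsto e_2, \Gamma}\rho$. Since $\esem{x_1 \mapsto e_1, \Gamma}\rho$ is by definition the least pre-fixed point of $\lambda \rho'.\, \rho \sqcup \dsem{x_1 \mapsto e_1, \Gamma}{\rho'}$, it suffices to check the two hypotheses of that lemma, namely $\rho \sqsubseteq \rho^*$ and $\dsem{x_1 \mapsto e_1, \Gamma}{\rho^*} \sqsubseteq \rho^*$. The whole argument then reduces to establishing these two facts and invoking Lemma~\ref{lem:esem_below}.

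The first hypothesis is immediate from Lemma~\ref{lem:rho_below_esem}. For the second, I would compare the two environments pointwise. Outside $\dom(x_1 \mapsto e_1, \Gamma)$ the left-hand side is $\bot$, so nothing has to be shown there. For a binding $y \mapsto e_y$ of $\Gamma$, the two heaps agree, so Lemma~\ref{lem:esem_this} applied at $\rho^*$ gives $\dsem{e_y}{\rho^*} \sqsubseteq (\esem{x_1 \mapsto e_2, \Gamma}\rho)\,y = \rho^*\,y$ directly, which is exactly the inequality needed at $y$.

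The only component where the assumption is actually used is $x_1$. There Lemma~\ref{lem:esem_this} yields $\dsem{e_2}{\rho^*} \sqsubseteq \rho^*\,x_1$, and chaining this with the hypothesis $\dsem{e_1}{\rho^*} \sqsubseteq \dsem{e_2}{\rho^*}$ gives $\dsem{e_1}{\rho^*} \sqsubseteq \rho^*\,x_1$, as required. This single inequality is the crux of the proof — it is the point at which replacing $e_2$ by the less-defined $e_1$ is absorbed into the fixed point $\rho^*$ — but I do not expect it to be a genuine obstacle once the reduction to Lemma~\ref{lem:esem_below} is in place. With both hypotheses verified, Lemma~\ref{lem:esem_below} delivers $\esem{x_1 \mapsto e_1, \Gamma}\rho \sqsubseteq \rho^* = \esem{x_1 \mapsto e_2, \Gamma}\rho$, which is the claim.
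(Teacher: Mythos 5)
Your proposal is correct and follows essentially the same route as the paper's own proof: both reduce the claim to Lemma~\ref{lem:esem_below} with $\rho^* = \esem{x_1 \mapsto e_2, \Gamma}\rho$, discharge $\rho \sqsubseteq \rho^*$ by Lemma~\ref{lem:rho_below_esem}, and verify the pre-fixed-point condition pointwise using Lemma~\ref{lem:esem_this} for the bindings of $\Gamma$ and, at $x_1$, the assumption combined with Lemma~\ref{lem:esem_this} and transitivity of $\sqsubseteq$.
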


\begin{proof}
By Lemma \ref{lem:esem_below}, it suffices to show $\rho \sqsubseteq \esem{x \mapsto e_2, \Gamma}\rho$, which follows from  Lemma \ref{lem:rho_below_esem}, and 
\[
(x \mapsto \dsem{e_1}{\esem{x \mapsto e_2, \Gamma}\rho}, \dsem{\Gamma}{\esem{x \mapsto e_2, \Gamma}\rho}) \sqsubseteq \esem{x \mapsto e_2, \Gamma}\rho.
\]
This follows from Lemma \ref{lem:esem_this} for Variables from $\dom\Gamma$, and for $x$ via the assumption, Lemma \ref{lem:esem_this} and transitivity of $\sqsubseteq$.
\end{proof}

\begin{lemma}
If
$\dsem{e_1}{\esem{x \mapsto e_2, \Gamma}\rho} \sqsubseteq \dsem{e_2}{\esem{x \mapsto e_2, \Gamma}\rho}
\text{ and }
\dsem{e_2}{\esem{x \mapsto e_1, \Gamma}\rho} \sqsubseteq \dsem{e_1}{\esem{x \mapsto e_1, \Gamma}\rho}
$
holds then $\esem{x \mapsto e_1, \Gamma}\rho = \esem{x \mapsto e_2, \Gamma}\rho$.
\label{lem:esem_subst_expr}
\end{lemma}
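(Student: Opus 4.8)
The plan is to obtain the claimed equality from the antisymmetry of the partial order $\sqsubseteq$, deriving each of the two required inequalities directly from the preceding Lemma~\ref{lem:esem_subst_expr_below}. That lemma converts a comparison of the two expressions $e_1$ and $e_2$ (each evaluated in the heap carrying $e_2$) into a comparison of the corresponding heap denotations, and our two hypotheses are precisely the two symmetric instances needed to feed it.

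Concretely, I would first apply Lemma~\ref{lem:esem_subst_expr_below} verbatim: its hypothesis is exactly our first assumption $\dsem{e_1}{\esem{x \mapsto e_2, \Gamma}\rho} \sqsubseteq \dsem{e_2}{\esem{x \mapsto e_2, \Gamma}\rho}$, so it yields $\esem{x \mapsto e_1, \Gamma}\rho \sqsubseteq \esem{x \mapsto e_2, \Gamma}\rho$. Next I would apply the same lemma with the roles of $e_1$ and $e_2$ interchanged; after the swap its hypothesis reads $\dsem{e_2}{\esem{x \mapsto e_1, \Gamma}\rho} \sqsubseteq \dsem{e_1}{\esem{x \mapsto e_1, \Gamma}\rho}$, which is exactly our second assumption, and its conclusion becomes $\esem{x \mapsto e_2, \Gamma}\rho \sqsubseteq \esem{x \mapsto e_1, \Gamma}\rho$. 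Antisymmetry of $\sqsubseteq$ then forces $\esem{x \mapsto e_1, \Gamma}\rho = \esem{x \mapsto e_2, \Gamma}\rho$.

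There is no real obstacle here: this lemma is essentially the symmetric closure of Lemma~\ref{lem:esem_subst_expr_below}, and the entire content of the argument is recognizing that the two hypotheses are the two directed instances that the earlier lemma consumes. The only point demanding a moment's care is that the swapped invocation is legitimate, i.e.\ that Lemma~\ref{lem:esem_subst_expr_below} is genuinely symmetric in the names $e_1$ and $e_2$; this holds since nothing in its statement privileges one expression over the other. Everything else is the routine bookkeeping of matching hypotheses to conclusions and invoking antisymmetry.
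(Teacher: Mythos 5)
Your proposal is correct and is exactly the paper's own proof, which reads in full ``By Lemma~\ref{lem:esem_subst_expr_below} and antisymmetry of $\sqsubseteq$'': two instances of Lemma~\ref{lem:esem_subst_expr_below} (one with $e_1$ and $e_2$ interchanged) yield the two inequalities, and antisymmetry finishes. You have merely spelled out the instantiation that the paper leaves implicit.
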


\begin{proof}
By Lemma \ref{lem:esem_subst_expr_below} and antisymmetry of $\sqsubseteq$.
\end{proof}

The next lemmas allows to replace a subexpression $e$ of an expression $e'[e]$ by a variable bound to that subexpression:

\begin{lemma}
Let $z \notin \dom \rho$. Then $\esem{y \mapsto e'[e], z \mapsto e, \Gamma}\rho = \esem{y \mapsto e'[z], z \mapsto e, \Gamma}\rho$.
\label{lem:exp_var_subst}
\label{lem:var_var_subst}
\end{lemma}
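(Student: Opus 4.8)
The claim is an equality between the denotations of two heaps that differ only in the body of the binding for $y$: on the left, a subexpression $e$ appears literally inside $e'[e]$, while on the right that occurrence has been replaced by the variable $z$, which is itself bound to $e$ elsewhere in the heap. Since $z \notin \dom\rho$ and $z$ is bound to $e$ in the heap, intuitively $z$ and $e$ should have the same denotation under $\esem{\cdots}\rho$, so substituting one for the other inside $e'$ should not change anything. The plan is to make this intuition precise and reduce the heap-level equality to the already-available machinery for replacing a binding.

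**Overall approach.** I would try to deploy Lemma~\ref{lem:esem_subst_expr}, which gives exactly an equality of the form $\esem{x \mapsto e_1, \Gamma}\rho = \esem{x \mapsto e_2, \Gamma}\rho$ provided the two mutual domination conditions on the denotations of $e_1$ and $e_2$ hold. Here I set $x \coloneqq y$, $e_1 \coloneqq e'[e]$, $e_2 \coloneqq e'[z]$, and absorb $z \mapsto e, \Gamma$ into the ``$\Gamma$'' of that lemma. So the task reduces to establishing, for $\rho^\dagger \coloneqq \esem{y \mapsto e'[e], z\mapsto e, \Gamma}\rho$ and likewise $\rho^\ddagger$ for the other heap, that $\dsem{e'[e]}{\rho^\ddagger} \sqsubseteq \dsem{e'[z]}{\rho^\ddagger}$ and symmetrically $\dsem{e'[z]}{\rho^\dagger} \sqsubseteq \dsem{e'[e]}{\rho^\dagger}$. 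The key fact powering both is that in any of these heaps $z$ is bound to $e$, so by Lemma~\ref{lem:esem_this} we have $(\esem{\cdots}\rho)\,z = \rho\,z \sqcup \dsem{e}{\esem{\cdots}\rho}$; and since $z \notin \dom\rho$, the term $\rho\,z$ is $\bot$, giving $(\esem{\cdots}\rho)\,z = \dsem{e}{\esem{\cdots}\rho} = \dsem{z}{\esem{\cdots}\rho}$. In other words, under these environments the variable $z$ and the expression $e$ genuinely have equal denotation.

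**Carrying it out.** With that equality of denotations for $z$ and $e$ in hand, I would argue that replacing $e$ by $z$ (or vice versa) inside the context $e'[\cdot]$ preserves denotation. This is a compositionality property of $\dsem{\_}{\_}$: the denotation of $e'[\cdot]$ is a monotone (indeed continuous) function of the denotation of the subterm plugged into the hole, so equal subterm-denotations yield equal whole-term-denotations. Formally this would be an induction on the context $e'$ (or on the structure of $e'$, treating the hole position), using the defining equations of $\dsem{\_}{\_}$ at each syntactic form; the only subtle point is that the hole may sit under a $\text{\textlambda}$ binder, so the environment at the hole is extended by $\{x \mapsto v\}$, but since $z$ and $e$ agree in denotation in every relevant environment (the agreement at $z$ comes from the heap lookup and is stable under such extensions as long as the bound variable is chosen fresh for $z$ and $e$), the argument goes through. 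This directly yields both domination inequalities (in fact equalities), and Lemma~\ref{lem:esem_subst_expr} then closes the goal.

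**Anticipated obstacle.** The main difficulty is the interaction with binders inside $e'$ and the freshness bookkeeping: I have to know that $z$ remains bound to $e$ with the ``right'' denotation even inside the scope of a $\lambda$ occurring in $e'$, and that the substitution $e'[z]$ does not capture $z$. This is precisely the naming/capture-avoidance issue the paper declares it is treating naively, so in this presentation it can be handled by the standard freshness convention, but it is the genuinely load-bearing step. A secondary, purely technical point is that both mutual-domination hypotheses of Lemma~\ref{lem:esem_subst_expr} must be checked with the environment being the \emph{fixed point} associated to the respective heap; the compositionality argument above does not depend on which of $\rho^\dagger$ or $\rho^\ddagger$ is used, since the crucial lookup $(\esem{\cdots}\rho)\,z = \dsem{e}{\esem{\cdots}\rho}$ holds for both, so the two inequalities are symmetric and follow from the same lemma applied on each side.
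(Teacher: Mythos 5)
Your proposal is correct and matches the paper's own proof essentially step for step: both reduce the goal to Lemma~\ref{lem:esem_subst_expr} (with $y$ as the replaced binding and $z \mapsto e, \Gamma$ absorbed into that lemma's heap), establish $\dsem{z}{\esem{y \mapsto e'[e], z \mapsto e, \Gamma}\rho} = \dsem{e}{\esem{y \mapsto e'[e], z \mapsto e, \Gamma}\rho}$ (and analogously for the other heap) via Lemma~\ref{lem:esem_this}, and lift this equality through the context $e'$ by compositionality of the denotational semantics. If anything, you are more explicit than the paper about two details it glosses over: the use of $z \notin \dom\rho$ to discard the $\rho\,z \sqcup {}$ contribution in Lemma~\ref{lem:esem_this}, and the freshness bookkeeping needed when the hole of $e'$ sits under a binder.
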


\begin{proof}
We have
$\dsem{z}{\esem{y \mapsto e'[e], z \mapsto e, \Gamma}\rho} = \dsem{e}{\esem{y \mapsto e'[e], z \mapsto e, \Gamma}\rho}$ by Lemma \ref{lem:esem_this}, so by the compositionality of the denotational semantics, $\dsem{e'[z]}{{\esem{y \mapsto e'[e], z \mapsto e, \Gamma}\rho}} = \dsem{e'[e]}{{\esem{y \mapsto e'[e], z \mapsto e, \Gamma}\rho}}$ holds.

Analogously, $\dsem{e'[e]}{{\esem{y \mapsto e'[z], z \mapsto e, \Gamma}\rho}} = \dsem{e'[z]}{{\esem{y \mapsto e'[z], z \mapsto e, \Gamma}\rho}}$, so by Lemma \ref{lem:esem_subst_expr} the proof is finished.
\end{proof}

%
%
%
%

Removing bindings from the denotation of a heap and adding them again does not modify the heap.
\begin{lemma}
$\esem{\Gamma}{(\esem{\Gamma, \Delta} \setminus \dom\Gamma)} = \esem{\Gamma, \Delta}$
\label{lem:redo}
\end{lemma}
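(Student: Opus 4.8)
The plan is to prove the two inequalities $\esem{\Gamma}{(\esem{\Gamma,\Delta}\setminus\dom\Gamma)} \sqsubseteq \esem{\Gamma,\Delta}$ and $\esem{\Gamma,\Delta} \sqsubseteq \esem{\Gamma}{(\esem{\Gamma,\Delta}\setminus\dom\Gamma)}$ separately and conclude by antisymmetry of $\sqsubseteq$. For brevity I would write $\sigma := \esem{\Gamma,\Delta}$ and $\tau := \esem{\Gamma}{(\sigma\setminus\dom\Gamma)}$, and recall that heaps are distinctly named, so $\dom\Gamma \cap \dom\Delta = \emptyset$. The workhorse throughout is the fixed-point unfolding $\esem{H}\rho = \rho \sqcup \dsem{H}{\esem{H}\rho}$ together with Lemma \ref{lem:esem_below}, which reduces each goal $\esem{H}\rho \sqsubseteq \rho^{*}$ to checking that the candidate $\rho^{*}$ dominates $\rho$ and is a pre-fixed point of $\dsem{H}{\cdot}$.

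For the first inequality I would invoke Lemma \ref{lem:esem_below} with heap $\Gamma$, starting environment $\sigma\setminus\dom\Gamma$, and candidate $\sigma$. The hypothesis $\sigma\setminus\dom\Gamma \sqsubseteq \sigma$ is immediate, since domain restriction only replaces entries by $\bot$. The hypothesis $\dsem{\Gamma}{\sigma} \sqsubseteq \sigma$ holds because $\sigma = \dsem{\Gamma,\Delta}{\sigma}$ (the fixed-point equation for $\sigma$ with $\rho_\bot$), while $\dsem{\Gamma}{\sigma}$ agrees with $\dsem{\Gamma,\Delta}{\sigma}$ on $\dom\Gamma$ and is $\bot$ on $\dom\Delta$. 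This already yields $\tau \sqsubseteq \sigma$, a fact I will reuse below.

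For the reverse inequality I would invoke Lemma \ref{lem:esem_below} with heap $\Gamma,\Delta$, starting environment $\rho_\bot$, and candidate $\tau$. Since $\rho_\bot \sqsubseteq \tau$ trivially, it remains to show $\dsem{\Gamma,\Delta}{\tau} = \dsem{\Gamma}{\tau} \sqcup \dsem{\Delta}{\tau} \sqsubseteq \tau$. The $\Gamma$-part is immediate from the unfolding $\tau = (\sigma\setminus\dom\Gamma) \sqcup \dsem{\Gamma}{\tau}$. The $\Delta$-part is the crux: for $x\mapsto e \in \Delta$ I must show $\dsem{e}{\tau} \sqsubseteq \tau\,x$. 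Here I would exploit that $\dom\Delta$ is disjoint from $\dom\Gamma$, so the restriction leaves $\sigma$ untouched on $\dom\Delta$, giving $(\sigma\setminus\dom\Gamma)\,x = \sigma\,x = \dsem{e}{\sigma}$ by Lemma \ref{lem:esem_this}; combined with $(\sigma\setminus\dom\Gamma) \sqsubseteq \tau$ from the unfolding, this gives $\dsem{e}{\sigma} \sqsubseteq \tau\,x$. It then suffices to bound $\dsem{e}{\tau} \sqsubseteq \dsem{e}{\sigma}$, which is exactly monotonicity of the denotation applied to the already-established $\tau \sqsubseteq \sigma$.

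The main obstacle is precisely this bridging step, and with it the ordering of the argument. The natural bound available for the $\Delta$-entries of $\tau$ is phrased in terms of $\dsem{\Delta}{\sigma}$, inherited through the restriction, whereas the pre-fixed-point condition asks for $\dsem{\Delta}{\tau}$. The two are reconciled only through $\tau \sqsubseteq \sigma$, so the first inequality must be proved before the second and fed into it by monotonicity. Everything else is routine unfolding of the fixed points and bookkeeping about the disjoint domains $\dom\Gamma$ and $\dom\Delta$.
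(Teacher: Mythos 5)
Your proof is correct and takes essentially the same route as the paper's: antisymmetry with both inclusions reduced to Lemma~\ref{lem:esem_below}, the $\Gamma$-bindings handled by unfolding the fixed point (Lemma~\ref{lem:esem_this}), and the $\Delta$-bindings handled by bridging $\dsem{e}{\tau} \sqsubseteq \dsem{e}{\sigma}$ via monotonicity applied to the first inclusion. The paper even flags exactly the dependency you identified, noting that the first inequality is used in the proof of the second.
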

\begin{proof}
We use the antisymmetry of $\sqsubseteq$. Note that we use the first inequality in the proof of the second inequality.
\begin{compactitem}[$\sqsubseteq$:]
\item[$\sqsubseteq$:]
By Lemma \ref{lem:esem_below}, it suffices to show
\[
(\esem{\Gamma, \Delta} \setminus \dom\Gamma) \sqcup \dsem{\Gamma}{\esem{\Gamma, \Delta}} \sqsubseteq \esem{\Gamma, \Delta}.
\]
which follows immediately from Lemma \ref{lem:esem_this}.
\item[$\sqsupseteq$:]
Again by Lemma \ref{lem:esem_below}, it suffices to show
\[
\dsem{\Gamma, \Delta}{\esem{\Gamma}{(\esem{\Gamma, \Delta} \setminus \dom\Gamma)}} \sqsubseteq \esem{\Gamma}{(\esem{\Gamma, \Delta} \setminus \dom\Gamma)},
\]
which we verify pointwise. For $x \mapsto e \in \Gamma$, we even have equality:
\begin{align*}
\dsem{e}{\esem{\Gamma}{(\esem{\Gamma, \Delta} \setminus \dom\Gamma)}}
&= (\esem{\Gamma, \Delta} \setminus \dom\Gamma)\, x \sqcup \dsem{e}{\esem{\Gamma}{(\esem{\Gamma, \Delta} \setminus \dom\Gamma)}} \\
\aexpl{because $x\in \dom\Gamma$} \\
&= (\esem{\Gamma}{(\esem{\Gamma, \Delta} \setminus \dom\Gamma)})\, x\\
\aexpl{by Lemma \ref{lem:esem_this}.} \\
\end{align*}
For $x \mapsto e \in \Delta$, we have 
\begin{align*}
\dsem{e}{\esem{\Gamma}{(\esem{\Gamma, \Delta} \setminus \dom\Gamma)}}
&\sqsubseteq \dsem{e}{\esem{\Gamma, \Delta}} \\
\aexpl{by case $\sqsubseteq$ and the monotonicity of $\dsem{e}\_$.} \\
&= (\esem{\Gamma, \Delta})\,x. \\
\aexpl{by Lemma \ref{lem:esem_this}} \\
&= (\esem{\Gamma, \Delta}\setminus \dom\Gamma)\,x. \\
\aexpl{because $x\notin \dom\Gamma$.} \\
&= (\esem{\Gamma}{(\esem{\Gamma, \Delta} \setminus \dom\Gamma)})\,x \\
\aexpl{by Lemma \ref{lem:esem_other}, as $x\notin \dom\Gamma$.}
\end{align*}
\end{compactitem}
\end{proof}

Fresh variables do not affect the denotation of expressions, as shown in the next tree lemmas.
\begin{lemma}
If all variables in $S$ are fresh with regard to $e$, then $\dsem{e}\rho = \dsem{e}{\rho \setminus S}$.
\label{lem:see_through_fresh}
\end{lemma}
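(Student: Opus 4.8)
The natural approach is structural induction on $e$, with the underlying principle being compositionality: $\dsem e \rho$ depends only on the values $\rho$ assigns to the free variables of $e$. Since every variable in $S$ is fresh with regard to $e$, the set $S$ is disjoint from $\fv e$, so $\rho$ and $\rho \setminus S$ agree on $\fv e$; the claim then says precisely that changing $\rho$ outside $\fv e$ is immaterial. I would carry out the induction explicitly rather than invoke a separate coincidence lemma, because the let-case forces me to reason about the heap semantics as well.

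For the variable case $e = x$, freshness gives $x \notin S$, so $(\rho\setminus S)\,x = \rho\,x$ and both sides equal $\rho\,x$. For an application $\sApp e x$, freshness with regard to $\sApp e x$ yields both that $S$ is fresh with regard to $e$ and that $x \notin S$; the induction hypothesis handles $\dsem e \rho = \dsem e {\rho\setminus S}$ while the variable argument is unchanged, so the two projections agree. For a lambda abstraction $\sLam x e$, I note that the bound variable $x$ is distinct from the fresh variables in $S$ and that $S$ stays fresh with regard to the body $e$ (its only free variable possibly outside $\fv{\sLam x e}$ is $x$ itself). Applying the induction hypothesis to $e$ at the environment $\rho \sqcup \{x \mapsto v\}$ and using $(\rho \sqcup \{x\mapsto v\})\setminus S = (\rho\setminus S)\sqcup\{x\mapsto v\}$, which holds because $x\notin S$, yields equality of the two functions under the injection $\sFn\_$ for every $v$.

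The let-case $\sLet{\xeng}e$ is the main obstacle, since $\dsem{\sLet{\xeng}e}\rho = \dsem e {\esem{\xeng}\rho}$ requires controlling how the heap semantics reacts to the removal of $S$. Here the bound variables $x_1,\ldots,x_n$ are distinct from $S$, and $S$ is fresh with regard to $e$ and to each $e_i$ modulo these bound names. The key sub-fact I would establish is that $\esem{\xeng}\rho$ and $\esem{\xeng}{\rho\setminus S}$ coincide on every variable relevant to $e$: on the bound $x_i$ this follows from the fixed-point equation (Lemma \ref{lem:esem_this}), whose right-hand side combines $\rho\,x_i = (\rho\setminus S)\,x_i$ with recursively computed values, and on the remaining free variables of the let it follows from Lemma \ref{lem:esem_other} together with the agreement of $\rho$ and $\rho\setminus S$ off $S$. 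Formally this is cleanest as a companion statement proved by the same induction, or via the uniqueness of least fixed points using Lemma \ref{lem:esem_below}, establishing that the heap denotation is itself insensitive to fresh variables. I expect most of the genuine work to sit in making this heap-level invariance precise while respecting the $\sqcup$ with $\rho$.
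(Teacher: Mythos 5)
Your proposal is correct and takes essentially the same approach as the paper, whose entire proof of this lemma reads ``by induction on $e$.'' Your elaboration of the let case --- recognizing that the naive pointwise comparison of $\esem{\xeng}\rho$ and $\esem{\xeng}{\rho\setminus S}$ is circular and must instead be handled by a companion heap-level statement established through leastness of fixed points (Lemma \ref{lem:esem_below}) --- is precisely the work that the paper's one-line proof leaves implicit.
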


\begin{proof}
by induction on $e$.
\end{proof}

\begin{lemma}
Let $x$ be fresh. Then $\esem \Gamma \rho = (\esem {x \mapsto e, \Gamma} \rho) \setminus \{x\}$.
\label{lem:addvar}
\end{lemma}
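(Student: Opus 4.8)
The plan is to first establish the auxiliary identity
\[
\esem{x \mapsto e, \Gamma}\rho = \esem{\Gamma}\rho \sqcup (x \mapsto \dsem{e}{\esem{\Gamma}\rho}),
\]
and then to remove the binding for $x$ from both sides. Since $x$ is fresh we have in particular $x \notin \dom\Gamma$ and $x \notin \dom\rho$, so Lemma \ref{lem:esem_other} gives $(\esem{\Gamma}\rho)\,x = \rho\,x = \bot$, i.e.\ $x \notin \dom(\esem{\Gamma}\rho)$. Hence restricting the right-hand side of the identity to the complement of $\{x\}$ discards exactly the one added binding and returns $\esem{\Gamma}\rho$, while the left-hand side becomes $(\esem{x \mapsto e, \Gamma}\rho) \setminus \{x\}$, which is the claim.

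It remains to prove the identity, which I would do by antisymmetry of $\sqsubseteq$, abbreviating $\rho^* \coloneqq \esem{\Gamma}\rho \sqcup (x \mapsto \dsem{e}{\esem{\Gamma}\rho})$. For the direction $\esem{x \mapsto e, \Gamma}\rho \sqsubseteq \rho^*$ I would apply Lemma \ref{lem:esem_below}: the side condition $\rho \sqsubseteq \rho^*$ follows from Lemma \ref{lem:rho_below_esem}, and the pre-fixed-point condition $\dsem{x \mapsto e, \Gamma}{\rho^*} \sqsubseteq \rho^*$ is checked pointwise. Because $x$ is fresh with regard to $e$ and to every right-hand side of $\Gamma$, Lemma \ref{lem:see_through_fresh} lets me replace $\rho^*$ by $\rho^* \setminus \{x\} = \esem{\Gamma}\rho$ inside each such denotation; the $x$-component then equals $\rho^*\,x = \dsem{e}{\esem{\Gamma}\rho}$ exactly, and for $y \in \dom\Gamma$ the component $\dsem{e_y}{\rho^*} = \dsem{e_y}{\esem{\Gamma}\rho}$ is bounded above by $\rho^*\,y = (\esem{\Gamma}\rho)\,y$ via Lemma \ref{lem:esem_this}. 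For the reverse inequality $\rho^* \sqsubseteq \esem{x \mapsto e, \Gamma}\rho$ it suffices to bound the two joinands separately: first $\esem{\Gamma}\rho \sqsubseteq \esem{x \mapsto e, \Gamma}\rho$ follows from Lemma \ref{lem:esem_below} applied with $\rho^* = \esem{x \mapsto e, \Gamma}\rho$, whose two defining conditions hold by Lemmas \ref{lem:rho_below_esem} and \ref{lem:esem_this}; and for the added binding, monotonicity of $\dsem{e}\_$ together with this inequality and Lemma \ref{lem:esem_this} yields $\dsem{e}{\esem{\Gamma}\rho} \sqsubseteq \dsem{e}{\esem{x \mapsto e, \Gamma}\rho} \sqsubseteq (\esem{x \mapsto e, \Gamma}\rho)\,x$.

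The main obstacle is the careful bookkeeping of freshness: the entire argument hinges on $x$ occurring nowhere in $\Gamma$, in $e$, or in $\dom\rho$, so that Lemma \ref{lem:see_through_fresh} is genuinely applicable to every denotation in which $\rho^*$ appears, and so that the added binding is truly decoupled from the fixed point computed for $\Gamma$ alone. A secondary technical point is the definedness of the joins $\sqcup$ used in $\rho^*$; as elsewhere in this presentation this is left implicit here but is discharged in the machine-checked development.
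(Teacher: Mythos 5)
Your proof is correct, but it takes a more explicit route than the paper, whose proof of Lemma \ref{lem:addvar} is a one-sentence sketch: since $x$ is fresh with regard to $\Gamma$, a binding of $x$ in the environment does not affect $\dsem{\Gamma}{\rho}$ (shown by induction on the expressions bound in $\Gamma$), whence the two fixed-point computations coincide outside $\{x\}$. You instead establish the stronger decomposition
\[
\esem{x \mapsto e, \Gamma}\rho = \esem{\Gamma}\rho \sqcup (x \mapsto \dsem{e}{\esem{\Gamma}\rho}),
\]
by antisymmetry, using the paper's own toolkit (Lemmas \ref{lem:esem_this}, \ref{lem:esem_other}, \ref{lem:rho_below_esem}, \ref{lem:esem_below} and \ref{lem:see_through_fresh}), and then strip off the $x$-binding. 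This buys more than the lemma asks for --- your identity also pins down the value of the extended heap at $x$, making it essentially a single-binding instance of Lemma \ref{lem:esem_merge} --- and it replaces the paper's ad-hoc induction by reuse of already proven lemmas, which is closer in spirit to how such a statement would be discharged in the formalization. The one point worth flagging is that your intermediate identity genuinely needs $x$ to be fresh with regard to $e$ itself (if $e$ mentioned $x$, the $x$-component would be a recursive fixed point rather than $\dsem{e}{\esem{\Gamma}\rho}$), whereas the paper's direct argument only needs $x$ fresh with regard to $\Gamma$ and $\rho$; this is harmless, since by the paper's convention a fresh variable is fresh with regard to the entire context, and in the only place the lemma is invoked (the \sRule{App} case of Theorem \ref{thm:stackedcorrectness}) the variable $w$ is indeed fresh with regard to the expression bound to it. Your closing remarks on the bookkeeping of freshness and on the implicit definedness of $\sqcup$ are exactly the right caveats.
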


\begin{proof}
If $x$ is fresh with regard to $\Gamma$, then a binding of $x$ in $\rho$ does not affect $\dsem \Gamma\rho$, as shown by induction on the expressions bound in $\Gamma$.
\end{proof}

\begin{lemma}
If $\dom\Gamma$ is fresh with regard to $\Delta$ and $\rho$, then $\esem{\Gamma}{\esem{\Delta}\rho} = \esem{\Gamma, \Delta}\rho$.
\label{lem:esem_merge}
\end{lemma}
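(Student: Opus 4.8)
The plan is to prove the equality by antisymmetry of $\sqsubseteq$ on environments, establishing $\esem{\Gamma}{\esem{\Delta}\rho} \sqsubseteq \esem{\Gamma, \Delta}\rho$ and $\esem{\Gamma, \Delta}\rho \sqsubseteq \esem{\Gamma}{\esem{\Delta}\rho}$ separately. For both inclusions the workhorse is Lemma \ref{lem:esem_below}, which lets me bound a heap denotation from above by exhibiting a pre-fixed point. Throughout I will use that $\dom\Gamma$ is disjoint from $\dom\Delta$ and $\dom\rho$, and that consequently $\dom\Gamma \cap \dom(\esem{\Delta}\rho) = \emptyset$: for $x \in \dom\Gamma$ we have $x \notin \dom\Delta$, so by Lemma \ref{lem:esem_other} $(\esem{\Delta}\rho)\,x = \rho\,x = \bot$.

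For the inclusion $\esem{\Gamma}{\esem{\Delta}\rho} \sqsubseteq \esem{\Gamma, \Delta}\rho$ I would apply Lemma \ref{lem:esem_below} with base environment $\esem{\Delta}\rho$ and candidate $\rho^* = \esem{\Gamma, \Delta}\rho$. This reduces the goal to the two conditions $\esem{\Delta}\rho \sqsubseteq \esem{\Gamma, \Delta}\rho$ and $\dsem{\Gamma}{\esem{\Gamma, \Delta}\rho} \sqsubseteq \esem{\Gamma, \Delta}\rho$. The latter is immediate from the fixed-point equation of $\esem{\Gamma, \Delta}\rho$, since the bindings of $\Gamma$ are part of the bindings of $\Gamma, \Delta$. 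The former follows by a second application of Lemma \ref{lem:esem_below}, using Lemma \ref{lem:rho_below_esem} for $\rho \sqsubseteq \esem{\Gamma, \Delta}\rho$ and again the fixed-point equation for the $\Delta$-part.

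The reverse inclusion $\esem{\Gamma, \Delta}\rho \sqsubseteq \esem{\Gamma}{\esem{\Delta}\rho}$ is where the freshness hypothesis does real work, and I expect this to be the main obstacle. Applying Lemma \ref{lem:esem_below} with base environment $\rho$ and candidate $\rho^* = \esem{\Gamma}{\esem{\Delta}\rho}$ reduces it to $\rho \sqsubseteq \esem{\Gamma}{\esem{\Delta}\rho}$ (two uses of Lemma \ref{lem:rho_below_esem}) and $\dsem{\Gamma, \Delta}{\esem{\Gamma}{\esem{\Delta}\rho}} \sqsubseteq \esem{\Gamma}{\esem{\Delta}\rho}$, which I verify pointwise. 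The bindings coming from $\Gamma$ are handled by the fixed-point equation of $\esem{\Gamma}{\esem{\Delta}\rho}$. For a binding $x \mapsto e \in \Delta$ the subtlety is that $e$ is evaluated in the enlarged environment $\esem{\Gamma}{\esem{\Delta}\rho}$ rather than in $\esem{\Delta}\rho$. Here I would use that $\dom\Gamma$ is fresh for $e$, so by Lemma \ref{lem:see_through_fresh} the value $\dsem{e}{\esem{\Gamma}{\esem{\Delta}\rho}}$ is unchanged by restricting away $\dom\Gamma$; and by Lemma \ref{lem:esem_other} together with the domain disjointness above, $(\esem{\Gamma}{\esem{\Delta}\rho}) \setminus \dom\Gamma = \esem{\Delta}\rho$. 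Hence $\dsem{e}{\esem{\Gamma}{\esem{\Delta}\rho}} = \dsem{e}{\esem{\Delta}\rho}$, which by Lemma \ref{lem:esem_this} lies below $(\esem{\Delta}\rho)\,x$ and therefore, by Lemma \ref{lem:rho_below_esem}, below $(\esem{\Gamma}{\esem{\Delta}\rho})\,x$. Assembling the pointwise bounds finishes the inclusion and, with antisymmetry, the lemma.
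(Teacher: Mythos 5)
Your proposal is correct and follows essentially the same route as the paper's own proof: antisymmetry of $\sqsubseteq$, with the inclusion $\esem{\Gamma}{\esem{\Delta}\rho} \sqsubseteq \esem{\Gamma,\Delta}\rho$ obtained by invoking Lemma \ref{lem:esem_below} twice (discharging the side conditions via Lemma \ref{lem:rho_below_esem} and Lemma \ref{lem:esem_this}), and the reverse inclusion by one application of Lemma \ref{lem:esem_below} followed by a pointwise check that treats the $\Delta$-bindings using Lemmas \ref{lem:see_through_fresh}, \ref{lem:esem_this} and \ref{lem:esem_other}. The only cosmetic difference is at the very end, where you settle for the bound $(\esem{\Delta}\rho)\,x \sqsubseteq (\esem{\Gamma}{\esem{\Delta}\rho})\,x$ via Lemma \ref{lem:rho_below_esem}, while the paper notes the equality $(\esem{\Delta}\rho)\,x = (\esem{\Gamma}{\esem{\Delta}\rho})\,x$ via Lemma \ref{lem:esem_other}; both close the argument.
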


\begin{proof}
We show this by using antisymmetry.
\begin{compactitem}[$\sqsubseteq$:]
\item[$\sqsubseteq$:]
By invoking Lemma \ref{lem:esem_below} twice, it suffices to show 
\begin{compactitem}
\item $\rho \sqsubseteq \esem{\Gamma,\Delta}\rho$, which follows from Lemma \ref{lem:rho_below_esem}, as well as 
\item $\dsem{\Delta}{\esem{\Gamma,\Delta}\rho} \sqsubseteq \esem{\Gamma,\Delta}\rho$ and
\item $\dsem{\Gamma}{\esem{\Gamma,\Delta}\rho} \sqsubseteq \esem{\Gamma,\Delta}\rho$, which follows from Lemma \ref{lem:esem_this}.
\end{compactitem}
\item[$\sqsupseteq$:] By Lemma 10, it suffices to show $\rho \sqsubseteq \esem{\Gamma}{\esem{\Delta}\rho}$, for which we invoke Lemma \ref{lem:rho_below_esem} twice, and $\dsem{\Gamma,\Delta}{\esem{\Gamma}{\esem{\Delta}\rho}} \sqsubseteq \esem{\Gamma}{\esem{\Delta}\rho}$. For the latter we consider two cases:
\begin{compactenum}
\item For $x \mapsto e \in \dom\Gamma$, this follows from 
Lemma \ref{lem:esem_this}.
\item For $x \mapsto e \in \dom\Delta$, we have that $\dom\Gamma$ is fresh with regard to $e$, so
\begin{align*}
\dsem{e}{\esem{\Gamma}{\esem{\Delta}\rho}}
&= \dsem{e}{\esem{\Delta}\rho}&
\aexpl{by Lemma \ref{lem:see_through_fresh}} \\
&= (\esem{\Delta}\rho)\, x&
\aexpl{by Lemma \ref{lem:esem_this}} \\
&= (\esem{\Gamma}{\esem{\Delta}\rho})\, x&
\aexpl{by Lemma \ref{lem:esem_other} and $x \notin \dom\Gamma$.}
\end{align*}
\end{compactenum}

\end{compactitem}
\end{proof}

The last lemma of this section states that unpacking a let-expression on the heap preserves the denotation of the existing bindings.

\begin{lemma}
$\esem{z \mapsto \sLet{\xeng}e, \Gamma} \preceq \esem{\xen, z \mapsto e, \Gamma}$.
\label{lem:let_unfold}
\end{lemma}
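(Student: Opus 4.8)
The plan is to turn the claim into a single equation between heap denotations and then read off $\preceq$ from it. Write $L = \sLet{\xeng}e$, so the two heaps are $z \mapsto L, \Gamma$ and $\xen, z\mapsto e, \Gamma$, and recall that $x_1,\ldots,x_n$ are fresh. The domain condition of $\preceq$ is immediate, since $\dom{(z\mapsto L,\Gamma)} = \{z\}\cup\dom\Gamma$ is contained in $\dom{(\xen, z\mapsto e,\Gamma)} = \{x_1,\ldots,x_n\}\cup\{z\}\cup\dom\Gamma$. It remains to show that the two denotations agree on $\{z\}\cup\dom\Gamma$.

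The heart of the argument is the equation
\[
\esem{\xen}\esem{z\mapsto L, \Gamma} = \esem{\xen, z\mapsto e, \Gamma}.
\]
Granting it, the lemma follows quickly: as the $x_i$ are fresh, $\{x_1,\ldots,x_n\}$ is disjoint from $\{z\}\cup\dom\Gamma$, so Lemma \ref{lem:esem_other} gives $(\esem{\xen}\esem{z\mapsto L,\Gamma})\,x = (\esem{z\mapsto L,\Gamma})\,x$ for every $x\in\{z\}\cup\dom\Gamma$. Chaining this with the displayed equation shows that $\esem{z\mapsto L,\Gamma}$ and $\esem{\xen,z\mapsto e,\Gamma}$ coincide on $\dom{(z\mapsto L,\Gamma)}$, which together with the domain inclusion is exactly $\esem{z\mapsto L,\Gamma}\preceq\esem{\xen,z\mapsto e,\Gamma}$.

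To prove the equation I would take a detour through the ``redundant'' heap $\xen, z\mapsto L,\Gamma$, which carries the let-bindings explicitly but still binds $z$ to the unevaluated let. Since the $x_i$ are fresh with regard to $z\mapsto L,\Gamma$ — they are bound inside $L$ and do not occur in $\Gamma$ — Lemma \ref{lem:esem_merge} merges them onto the heap, giving $\esem{\xen}\esem{z\mapsto L,\Gamma} = \esem{\xen, z\mapsto L,\Gamma}$. It then remains to rewrite the binding of $z$ from $L$ to $e$, i.e.\ to show $\esem{\xen, z\mapsto L,\Gamma} = \esem{\xen, z\mapsto e,\Gamma}$, which I would obtain from the binding-replacement Lemma \ref{lem:esem_subst_expr} applied at $z$ with the expressions $L$ and $e$. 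Using the denotation of let, $\dsem{L}{\rho'} = \dsem{e}{\esem{\xen}\rho'}$, together with monotonicity of $\dsem{e}{\cdot}$, its two hypotheses reduce to $\esem{\xen}\sigma\sqsubseteq\sigma$ for $\sigma = \esem{\xen, z\mapsto e,\Gamma}$ (Lemma \ref{lem:esem_below}, using Lemma \ref{lem:esem_this} at the bindings $x_i\mapsto e_i$ of that heap) and to $\tau\sqsubseteq\esem{\xen}\tau$ for $\tau = \esem{\xen, z\mapsto L,\Gamma}$ (Lemma \ref{lem:rho_below_esem}).

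The main obstacle is that Lemma \ref{lem:esem_merge} cannot be used to decompose the unfolded heap $\xen, z\mapsto e,\Gamma$ directly: the body $e$ refers to the $x_i$ while the definitions $e_i$ may refer to $z$ and to variables of $\Gamma$, so no grouping of that heap meets the freshness side-condition of the merge lemma. Routing through the redundant heap $\xen, z\mapsto L,\Gamma$ — where $L$ keeps the $x_i$ bound internally, so merging is legitimate, and where passing from $L$ to $e$ is a purely local rewrite of the single binding of $z$ — is what untangles this mutual recursion. Discharging the two inequalities feeding Lemma \ref{lem:esem_subst_expr} is then the only genuinely computational part, and I expect it to be the step requiring the most care.
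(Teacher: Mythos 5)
Your overall route is sound and genuinely different from the paper's. Writing $L = \sLet{\xeng}e$, the paper proves $\esem{z\mapsto L,\Gamma} = \esem{\xen,z\mapsto e,\Gamma}\setminus\{x_1,\ldots,x_n\}$ directly, by antisymmetry and leastness of the two fixed points; you instead work one level up, proving $\esem{\xen}{\esem{z\mapsto L,\Gamma}} = \esem{\xen,z\mapsto e,\Gamma}$ by merging with Lemma \ref{lem:esem_merge} (an application that in fact also occurs inside the paper's $\sqsupseteq$ direction, under the same convention that bound occurrences of the $x_i$ in $L$ do not spoil freshness) and then exchanging the binding of $z$ with Lemma \ref{lem:esem_subst_expr}. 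The final read-off of $\preceq$ via Lemma \ref{lem:esem_other} and the domain inclusion is correct, and replacing bare fixed-point induction by Lemma \ref{lem:esem_subst_expr} is a pleasant modularization.

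There is, however, one genuine gap: to discharge the two hypotheses of Lemma \ref{lem:esem_subst_expr} you unfold $\dsem{L}{\rho'} = \dsem{e}{\esem{\xen}{\rho'}}$ at $\rho' = \sigma = \esem{\xen,z\mapsto e,\Gamma}$ and at $\rho' = \tau = \esem{\xen,z\mapsto L,\Gamma}$, and both of these environments bind $x_1,\ldots,x_n$ to non-$\bot$ values. In the rigorous (machine-checked) development the defining equation for let carries the side condition that the let-bound variables are fresh with regard to the environment. This is not pedantry: with the $\sqcup$-based heap semantics, an outer binding of an $x_i$ is joined into the inner heap $\esem{\xen}{\rho'}$ -- exactly the kind of pollution behind the counter-example of Section \ref{counterexample} -- and without the side condition the semantics would not even be invariant under alpha-renaming of the let. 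Accordingly, the paper's own proof is careful to unfold lets only at environments in which the $x_i$ are unbound, so ``by the denotation of let'' is not available at $\sigma$ and $\tau$ as you use it.

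The gap is repairable with the paper's own toolkit, and the repair even upgrades your two inequalities to equalities. Since the $x_i$ are fresh with regard to $L$, Lemma \ref{lem:see_through_fresh} gives $\dsem{L}{\sigma} = \dsem{L}{\sigma\setminus\{x_1,\ldots,x_n\}}$; the unfolding is now legitimate and yields $\dsem{e}{\esem{\xen}{(\sigma\setminus\{x_1,\ldots,x_n\})}}$, which equals $\dsem{e}{\sigma}$ by Lemma \ref{lem:redo} (instantiated with $\Gamma \coloneqq (\xen)$ and $\Delta \coloneqq (z\mapsto e,\Gamma)$). The same three steps at $\tau$ give $\dsem{L}{\tau} = \dsem{e}{\tau}$. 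Hence both hypotheses of Lemma \ref{lem:esem_subst_expr} hold with equality, and your detour through monotonicity, Lemma \ref{lem:esem_below} and Lemma \ref{lem:rho_below_esem} can be dropped.
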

\begin{proof}
Let $\Gamma' \Coloneqq (\xen, z \mapsto e, \Gamma)$ and $e' \Coloneqq \sLet{\xeng}e$. The lemma follows from $\esem{z \mapsto e', \Gamma} = \esem{\Gamma'} \setminus \{x_1,\ldots,x_n\}$, which we show using antisymmetry.
\begin{compactitem}[$\sqsubseteq$:]
\item[$\sqsubseteq$:]
The left hand side is a least fixed point, so it suffices to show $\dsem{z \mapsto e', \Gamma}{\Gamma' \setminus \{x_1,\ldots,x_n\}} = \Gamma' \setminus \{x_1,\ldots,x_n\}$. For variables in the domain of $\Gamma$, this follows from Lemma \ref{lem:esem_this}. For $z$, we have
\begin{align*}
\dsem{e'}{\Gamma' \setminus \{x_1,\ldots,x_n\}}
&= \dsem{e}{\esem{x_1 \mapsto e_1,\ldots,x_n\mapsto e_n}{(\esem{\Gamma'} \setminus \{x_1,\ldots,x_n\})}} \\
\aexpl{by the denotation of let-expressions} \\
&= \dsem{e}{\esem{\Gamma'}} \\
\aexpl{by Lemma \ref{lem:redo}} \\
&\sqsubseteq \Gamma'\, z. \\
\aexpl{by Lemma \ref{lem:esem_this}.}
\end{align*}

\item[$\sqsupseteq$:]
First note that
\begin{align*}
\esem{\Gamma'} \sqsubseteq \esem{\xen}{\esem{z \mapsto e', \Gamma}} \tag{$\ast$}
\end{align*}
for which it suffices to show
\[
\dsem {\Gamma'}{\esem{\xen}{\esem{z \mapsto e', \Gamma}}} \sqsubseteq \esem{\xen}{\esem{z \mapsto e', \Gamma}},
\]
which falls into two cases:
\begin{compactenum}
\item For $z$, we have 
\begin{align*}
\dsem{e}{\esem{\xen}{\esem{z \mapsto e', \Gamma}}}
&= \dsem{e'}{\esem{z \mapsto e', \Gamma}} \\
\aexpl{by the denotation of let-expressions} \\
&= (\esem{z \mapsto e', \Gamma})\, z\\
\aexpl{by Lemma \ref{lem:esem_this}} \\
&= (\esem{\xen}{\esem{z \mapsto e', \Gamma}})\, z \\
\aexpl{by Lemma \ref{lem:esem_other}}
\end{align*}

\item For $x \mapsto e^* \in (\xen, \Gamma)$, we have
\begin{align*}
\dsem{e^*}{\esem{\xen, z \mapsto e', \Gamma}}
&= (\esem{\xen, z \mapsto e', \Gamma})\, x
\end{align*}
by Lemma \ref{lem:esem_this}. Using Lemma \ref{lem:esem_merge} this concludes case 2, as the $x_1,\ldots,x_n$ are fresh with regard to $(z\mapsto e', \Gamma)$.
\end{compactenum}

Now we can show
\begin{align*}
\esem{\Gamma'} \setminus \{x_1,\ldots,x_n\}
&\sqsubseteq  \esem{\xen}{\esem{z \mapsto e', \Gamma}}  \setminus \{x_1,\ldots,x_n\} \\
\aexpl{by ($\ast$)} \\
& = {\esem{z \mapsto e', \Gamma}} \\
\aexpl{by Lemma \ref{lem:remove}.}
\end{align*}
\end{compactitem}

\end{proof}

\subsection{Properties of the update-based denotational semantics}
\label{updsemanticsprops}

To reproduce Launchbury's correctness proof (Theorem~\ref{thm:thm2}) with regard to the update-based semantics we first show some lemmas about the denotational semantics.
 

The lemma that justifies the introduction of the update-based semantics is the following, which is the equality that was used in the original proof but does not hold for the standard denotational semantics (see Section~\ref{counterexample}).

\begin{lemma}
\label{lem:esemu_this}
For $(x\mapsto e)\in \Gamma$ we have $(\esemu{\Gamma}\rho)\,x = \dsem{e}{\esemu{\Gamma}\rho}$.
\end{lemma}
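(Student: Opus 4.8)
The plan is to prove Lemma~\ref{lem:esemu_this} directly from the defining fixed-point equation of the update-based heap semantics, in close analogy with the proof of Lemma~\ref{lem:esem_this}, the only difference being that the right-sided update $+$ makes the $\rho$-contribution disappear on $\dom\Gamma$, so that we obtain a clean equality rather than the problematic join with $\rho\,x$.

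First I would unfold the fixed point once. By definition $\esemu{\Gamma}\rho = \mu\rho'.\,\rho + \dsemu{\Gamma}{\rho'}$, so the environment $\esemu{\Gamma}\rho$ is a fixed point of the functorial $\lambda\rho'.\,\rho+\dsemu{\Gamma}{\rho'}$, giving
\[
\esemu{\Gamma}\rho = \rho + \dsemu{\Gamma}{\esemu{\Gamma}\rho}.
\]
Then I would evaluate both sides at the variable $x$, invoking the defining case split of the update operator $+$. Since $(x\mapsto e)\in\Gamma$ we have $x\in\dom\Gamma$, so by the definition of $\rho + \dsemu{\Gamma}{\rho'}$ the first branch applies and the $\rho\,x$ contribution is discarded entirely:
\[
(\esemu{\Gamma}\rho)\,x = (\rho + \dsemu{\Gamma}{\esemu{\Gamma}\rho})\,x = (\dsemu{\Gamma}{\esemu{\Gamma}\rho})\,x = \dsem{e}{\esemu{\Gamma}\rho},
\]
where the last equality uses $(x\mapsto e)\in\Gamma$ together with the notational convention that $\dsemu{\Gamma}{\rho'}$ denotes the environment $(x_1\mapsto\dsem{e_1}{\rho'},\ldots,x_n\mapsto\dsem{e_n}{\rho'})$, and Lemma~\ref{lem:deneq} to pass between $\dsemu{\_}\_$ and $\dsem{\_}\_$.

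There is no real obstacle here: unlike the standard semantics, where unfolding produces $\rho\,x \sqcup \dsem{e}{\esem{\Gamma}\rho}$ and the stray join with $\rho\,x$ is exactly what breaks `Theorem'~\ref{thm:false}, the update operator is designed precisely so that bindings in $\dom\Gamma$ override $\rho$ rather than being joined with it. The only points requiring a word of care are that the least-fixed point genuinely is a fixed point (so that unfolding is justified, which holds since $+$ is defined everywhere and the functorial is continuous) and the bookkeeping of the $\dsemu{\_}\_$ notation; both are routine.
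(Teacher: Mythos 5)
Your proposal is correct and matches the paper's proof, which is exactly ``by unrolling the fixed point once''; you have merely spelled out the routine details (evaluating the unfolded equation at $x$, the case split in the definition of $+$, and the passage between $\dsemu{\_}\_$ and $\dsem{\_}\_$ via Lemma~\ref{lem:deneq}) that the paper leaves implicit.
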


\begin{proof}
by unrolling the fixed point once.
\end{proof}

The other case when looking up a variable in the denotation of a heap is

\begin{lemma}
\label{lem:esemu_other}
For $x \notin \dom\Gamma$ we have $(\esemu{\Gamma}\rho)\,x = \rho\, x$.
\end{lemma}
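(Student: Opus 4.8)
The plan is to prove Lemma~\ref{lem:esemu_other} by unrolling the least-fixed-point definition of $\esemu{\Gamma}\rho$ exactly once, mirroring the proof of its companion Lemma~\ref{lem:esemu_this}. Recall that $\esemu{\Gamma}\rho = \mu\rho'.\,\rho + \dsemu{\Gamma}{\rho'}$, so the fixed point satisfies $\esemu{\Gamma}\rho = \rho + \dsemu{\Gamma}{\esemu{\Gamma}\rho}$. Evaluating both sides at a variable $x$ and appealing to the definition of the update operator $+$ will immediately settle the claim.

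Concretely, I would first record the fixed-point equation $\esemu{\Gamma}\rho = \rho + \dsemu{\Gamma}{\esemu{\Gamma}\rho}$, which holds because $\esemu{\Gamma}\rho$ is by construction a fixed point of $\lambda\rho'.\,\rho + \dsemu{\Gamma}{\rho'}$. Then I would apply this equation at the argument $x$, obtaining $(\esemu{\Gamma}\rho)\,x = (\rho + \dsemu{\Gamma}{\esemu{\Gamma}\rho})\,x$. Since the definition of $+$ splits on whether $x \in \dom\Gamma$, and the denotation $\dsemu{\Gamma}{\rho'}$ is a mapping with domain $\dom\Gamma$, the hypothesis $x \notin \dom\Gamma$ places us in the second branch of the case distinction, so $(\rho + \dsemu{\Gamma}{\esemu{\Gamma}\rho})\,x = \rho\,x$, which is exactly the desired conclusion.

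I do not expect any genuine obstacle here; the statement is the ``easy half'' of the lookup characterization, and the whole argument is a single unfolding plus the case distinction built into the definition of $+$. The only point requiring a moment's care is confirming that $\dsemu{\Gamma}{\rho'}$ is understood to have domain precisely $\dom\Gamma$, so that the $+$ operator genuinely falls through to $\rho\,x$ when $x \notin \dom\Gamma$; this is immediate from the way $\dsemu{\_}\_$ tabulates a heap as a map on its bound variables. Hence the proof is, as the heading of the companion lemma suggests, simply ``by unrolling the fixed point once.''
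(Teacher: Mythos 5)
Your proposal is correct and matches the paper's proof exactly: the paper's proof is literally ``by unrolling the fixed point once,'' and your spelled-out version --- applying the fixed-point equation $\esemu{\Gamma}\rho = \rho + \dsemu{\Gamma}{\esemu{\Gamma}\rho}$ at $x$ and using the case distinction built into the definition of $+$ --- is precisely what that one-line proof abbreviates. No gaps; the observation that $\dsemu{\Gamma}{\rho'}$ only governs variables in $\dom\Gamma$ is indeed the only point needing care, and it holds by the definition of the update operator.
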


\begin{proof}
by unrolling the fixed point once.
\end{proof}

We show an alternative, iterative definition of the heap semantics.

\begin{lemma}
$
\esemu{x \mapsto e, \Gamma}\rho = \big(\mu \rho'.\,  \rho + (\esemu{\Gamma}{\rho'})|_{\dom\Gamma} + (x \mapsto \dsem{e}{\esemu{\Gamma}\rho'})\big).
$
\label{lem:iter}
\end{lemma}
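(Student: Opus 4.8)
The plan is to prove the identity by antisymmetry of $\sqsubseteq$, viewing each side as a least fixed point and exhibiting one side as a (pre-)fixed point of the functional defining the other; conceptually this is a standard nested-fixed-point (Beki\'{c}) decomposition of the simultaneous fixed point over $\{x\} \cup \dom\Gamma$ into an inner fixed point over $\dom\Gamma$ and an outer one over $x$. Write $L \coloneqq \esemu{x \mapsto e, \Gamma}\rho$, which by definition is the least fixed point of $G(\rho') \coloneqq \rho + \dsemu{(x \mapsto e, \Gamma)}{\rho'}$, and write $R$ for the right-hand side, the least fixed point of $F(\rho') \coloneqq \rho + (\esemu{\Gamma}{\rho'})|_{\dom\Gamma} + (x \mapsto \dsem{e}{\esemu{\Gamma}{\rho'}})$. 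Both functionals are continuous, since the parametrised heap semantics $\rho' \mapsto \esemu{\Gamma}{\rho'}$ is, so the least fixed points exist and agree with the least pre-fixed points; throughout I use Lemma~\ref{lem:deneq} to identify $\dsemu{\_}\_$ with $\dsem{\_}\_$. As $x \notin \dom\Gamma$ by the distinctness convention, the coordinates of $G$ and $F$ split into $x$, the variables of $\dom\Gamma$, and the remaining variables, which both functionals pin to $\rho$.

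The observation that drives the argument is that $R$ is a fixed point of the inner heap functional for $\Gamma$, i.e.\ $\esemu{\Gamma}{R} = R$. First I would read the defining equation $R = F(R)$ off coordinatewise: for $y \in \dom\Gamma$ the middle update forces $R\,y = (\esemu{\Gamma}{R})\,y$ directly, while for every $y \notin \dom\Gamma$ (in particular $y = x$) Lemma~\ref{lem:esemu_other} gives $(\esemu{\Gamma}{R})\,y = R\,y$; together these yield $\esemu{\Gamma}{R} = R$. Substituting this back into $R = F(R)$ and applying Lemma~\ref{lem:esemu_this} on $\dom\Gamma$ then shows $R\,y = \dsem{e_y}{R}$ for each binding $y \mapsto e_y \in \Gamma$ and $R\,x = \dsem{e}{R}$, that is, $G(R) = R$. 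Since $L$ is the least fixed point of $G$, this gives $L \sqsubseteq R$.

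For the reverse inequality I would show that $L$ is a pre-fixed point of $F$, whence $R = \mu F \sqsubseteq L$. The corresponding but weaker fact is $\esemu{\Gamma}{L} \sqsubseteq L$: reading $L = G(L)$ off coordinatewise shows that $L$ is itself a fixed point of $\sigma \mapsto L + \dsemu{\Gamma}{\sigma}$, and as $\esemu{\Gamma}{L}$ is the least such it lies below $L$. Monotonicity of $\dsem{e}{\_}$ and of the coordinate lookups then gives $F(L) \sqsubseteq L$ coordinatewise, completing the antisymmetry.

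The main obstacle, and the reason the two directions are not symmetric, is exactly this difference between the equality $\esemu{\Gamma}{R} = R$ and the one-sided bound $\esemu{\Gamma}{L} \sqsubseteq L$. I cannot shorten the $L \sqsubseteq R$ direction with a refinement bound $\rho' \sqsubseteq \esemu{\Gamma}{\rho'}$, because the right-biased update $+$ overwrites the $\dom\Gamma$-coordinates, so the update-based heap semantics does not refine its environment the way the $\sqcup$-based one does in Lemma~\ref{lem:rho_below_esem}. The care therefore goes into establishing the exact equality $\esemu{\Gamma}{R} = R$ precisely where a one-sided bound would not suffice.
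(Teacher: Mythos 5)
Your proof is correct and takes essentially the same route as the paper's: antisymmetry of $\sqsubseteq$, with the equality $\esemu{\Gamma}{(\mu R)} = (\mu R)$ driving the direction $(\mu L) \sqsubseteq (\mu R)$, and the relationship between $\esemu{\Gamma}{(\mu L)}$ and $(\mu L)$ together with monotonicity of $\dsem{e}{\_}$ driving the converse. The only difference is a mild economy: the paper proves the full equality $\esemu{\Gamma}{(\mu L)} = (\mu L)$ and exhibits $(\mu L)$ as a genuine fixed point of the right-hand functorial, whereas you make do with the one-sided bound $\esemu{\Gamma}{(\mu L)} \sqsubseteq (\mu L)$ and the fact that least fixed points are least pre-fixed points.
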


A corresponding lemma can be found in Launchbury \shortcite{launchbury}, but without proof. As the proof involves some delicate fixed-point-juggling, we include it here in detail:

\begin{proof}
Let $L = (\lambda \rho'.\, \rho + \dsem{x\mapsto e, \Gamma}{\rho'})$ be the functorial of the fixed point on the left hand side, $R$~be the functorial on the right hand side.

By Lemmas \ref{lem:esemu_this} and \ref{lem:esemu_other}, we have
\begin{compactenum}[(1)]
\item $(\mu L)\, y = \dsem{e'}{\mu L}$ for $y\mapsto e'\in\dom\Gamma$,
\item $(\mu L)\, x = \dsem{e}{\mu L}$,
\item $(\mu L)\, y = \rho\, y$ for $y \notin \{x\}\cup\dom\Gamma$.
\end{compactenum}
Similarly, by unrolling the fixed points, we have
\begin{compactenum}[(1)]
\item[(4)] $(\mu R)\, y = \dsem{e'}{\esemu{\Gamma}{(\mu R)}}$ for $y \mapsto e'\in\dom\Gamma$,
\item[(4)] $(\mu R)\, x = \dsem{e}{\esemu{\Gamma}{(\mu R)}}$,
\item[(4)] $(\mu R)\, y = \rho\, y$ for $y \notin \{x\}\cup\dom\Gamma$,
\end{compactenum}
and also for $\rho' \in \sEnv$ (in particular for $\rho' = (\mu L)$, $(\mu R)$), again using Lemmas \ref{lem:esemu_this} and \ref{lem:esemu_other},
\begin{compactenum}[(1)]
\item[(7)] $(\esemu{\Gamma}{\rho'})\,y = \dsem{e}{\esemu{\Gamma}{\rho'}}$ for $y \mapsto e' \in \dom\Gamma$,
\item[(8)] $(\esemu{\Gamma}{\rho'})\,y = \rho'\, y$ for $y \notin \dom\Gamma$.
\end{compactenum}

\medskip
We obtain
\begin{compactenum}[(1)]
\item[(9)] $\esemu{\Gamma}{(\mu R)} = (\mu R)$
\end{compactenum}
from comparing (4)--(6) with (7) and (8). We can also show
\begin{compactenum}[(1)]
\item[(10)] $\esemu{\Gamma}{(\mu L)} = (\mu L)$,
\end{compactenum}
by antisymmetry and use that least fixed points are least pre-fixed points:
\begin{compactitem}[$\sqsubseteq$:]
\item[$\sqsubseteq$:] We need to show that $(\mu L) + \dsem{\Gamma}{(\mu L)} \sqsubseteq (\mu L)$, which follows from (1). 
\item[$\sqsupseteq$:] We need to show that $\esemu{\Gamma}{(\mu L)} + \dsem{x\mapsto e, \Gamma}{\esemu{\Gamma}{(\mu L)}} \sqsubseteq \esemu{\Gamma}{(\mu L)}$. For $\dom\Gamma$, this follows from (7), so we show $\dsem{e}{\esemu{\Gamma}{(\mu L)}} \sqsubseteq (\mu L)\, x = \dsem{e}{(\mu L)}$, which follows from the monotonicity of $\dsem{e}{\_}$ and case $\sqsubseteq$.
\end{compactitem}

To show the lemma, $(\mu L) = (\mu R)$, we use the antisymmetry of $\sqsubseteq$ and the leastness of least fixed points:
\begin{compactitem}[$\sqsubseteq$:]
\item[$\sqsubseteq$:] We need to show that $L\, (\mu R) = \mu R$, i.e.
\begin{compactitem}
\item $\rho\,y = (\mu R)\, y$ for $y \notin \{x\}\cup \dom\Gamma$, which follows from (6),
\item $\dsem{e'}{\mu R} = (\mu R)\, y$ for $y \mapsto e' \in \Gamma$, which follows from (4) and (9) and
\item $\dsem{e}{\mu R} = (\mu R)\, x$, which follows from (5) and (9).
\end{compactitem}
\item[$\sqsupseteq$:] Now we have to show that $R\ (\mu L) = (\mu L)$, i.e.
\begin{compactitem}
\item $\rho\,y = (\mu L)\, y$ for $y \notin \{x\}\cup \dom\Gamma$, which follows from (3),
\item $\dsem{e'}{\esemu{\Gamma}{(\mu L)}} = (\mu L)\, y$ for $y \mapsto e' \in \Gamma$, which follows from (1) and (10), and
\item $\dsem{e}{\esemu{\Gamma}{(\mu L)}} = (\mu L)\, x$, which follows from (2) and (10).
\end{compactitem}
\end{compactitem}
\end{proof}

Next we prove that substitutions in terms are equivalent to indirections on the heap. This lemma will also be useful when proving adequacy: Instead of bringing the operational semantics closer to the denotational semantics by replacing the substiution in the operational semantics with an indirection via the heap, as proposed in \cite{launchbury} and carried out in \cite{indirections}, this lemma performs the step on the denotational side.

\begin{lemma}
If $y$ is fresh with regard to $\rho$, then
\label{lem:subst}
\[
\dsem{e}{\rho (y \mapsto \dsem{x}\rho)} = \dsem{ e[x/y]}{\rho}.
\]
\end{lemma}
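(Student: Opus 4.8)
The plan is to prove the statement by structural induction on $e$, exploiting that the only nontrivial interaction is between the update $(y\mapsto\dsem{x}{\rho})$ and the binders of $e$. Throughout, the naming convention lets us assume every bound variable is fresh, in particular distinct from $x$ and $y$ and absent from $\dom\rho$; together with the freshness of $y$ this keeps the hypothesis ``$y$ fresh with regard to the environment'' available in every recursive call, since adding a fresh binding never touches $y$.

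The base cases are immediate. If $e = y$ then $e[x/y] = x$ and both sides equal $\rho\,x$; if $e = z$ with $z \ne y$ then $e[x/y] = z$, the update does not touch $z$, and both sides equal $\rho\,z$. For an application $\sApp{e_0}{z}$ the denotation is $\sFnProj{\dsem{e_0}{\rho'}}{\dsem{z}{\rho'}}$ with $\rho' = \rho(y\mapsto\dsem{x}{\rho})$, so applying the induction hypothesis to $e_0$ and to the argument $z$ separately, and using that substitution distributes over applications, gives the claim.

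The lambda case is where the commutation of the update with a binder first appears. Unfolding $\dsem{\sLam z{e_0}}{\rho'} = \sFn{\lambda v.\dsem{e_0}{\rho'\sqcup\{z\mapsto v\}}}$, I would apply the induction hypothesis to $e_0$ in the environment $\rho\sqcup\{z\mapsto v\}$. The key verification is that $\rho'\sqcup\{z\mapsto v\}$ coincides with $(\rho\sqcup\{z\mapsto v\})(y\mapsto\dsem{x}{\rho\sqcup\{z\mapsto v\}})$: since $z\ne x$ we have $\dsem{x}{\rho\sqcup\{z\mapsto v\}} = \rho\,x$, and since $z\ne y$ the update at $y$ and the join at $z$ do not interfere, so the two environments agree pointwise. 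Because $z$ is fresh with regard to $\rho$, $y$ stays fresh in the enlarged environment, so the hypothesis applies; pushing it inside $\sFn{\lambda v.\,\cdot}$ finishes the case.

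The let case is the main obstacle, because $\dsem{\sLet{\xeng}{e_0}}{\rho'} = \dsem{e_0}{\esem{\Gamma}{\rho'}}$ routes through the mutually recursive heap semantics and its least fixed point. To apply the induction hypothesis to the body $e_0$ in the environment $\esem{\Gamma[x/y]}{\rho}$, I first need the heap-level substitution identity
\[
\esem{\Gamma}{\rho(y\mapsto\dsem{x}{\rho})} = \big(\esem{\Gamma[x/y]}{\rho}\big)\big(y\mapsto\dsem{x}{\esem{\Gamma[x/y]}{\rho}}\big),
\]
where $\Gamma = (\xen)$ and $\Gamma[x/y]$ substitutes in every bound expression. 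I would establish this by antisymmetry of $\sqsubseteq$, showing each side is a pre-fixed point of the functional defining the other, in the style of Lemmas \ref{lem:esem_below} and \ref{lem:redo}: the induction hypothesis applied to each heap entry $e_i$ rewrites $\dsem{e_i}{\esem{\Gamma}{\rho'}}$ into $\dsem{e_i[x/y]}{\cdot}$, while freshness of the $x_i$ keeps $\dsem{x}{\cdot} = \rho\,x$ stable and reconciles the update at $y$ with the fixed point. Once this identity is in hand, the body is handled by the induction hypothesis exactly as in the lambda case, and substitution distributing over let closes the proof. I expect the fixed-point bookkeeping in this heap identity — matching domains, keeping $y$ outside $\dom\Gamma$, and verifying the pre-fixed-point inequalities in both directions — to be the only genuinely delicate part.
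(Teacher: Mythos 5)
Your proposal is correct and takes essentially the same approach as the paper: induction on $e$ with every case routine except \sRule{Let}, where your heap-level identity $\esemu{\Gamma}{(\rho(y\mapsto\dsem{x}{\rho}))} = (\esemu{\Gamma[x/y]}{\rho})(y\mapsto\dsem{x}{\esemu{\Gamma[x/y]}{\rho}})$ is exactly the paper's equation $(\ast)$, established there too by antisymmetry of $\sqsubseteq$, leastness of fixed points, and the induction hypothesis applied to the heap entries $e_i$. The only cosmetic difference is that you write the heap semantics as $\esem{\_}{\_}$ where this appendix uses the update-based $\esemu{\_}{\_}$, which is immaterial here because the let-bound variables are fresh with regard to the environment, so the two heap semantics coincide.
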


\begin{proof}
By induction on $e$. All cases are trivial but case \sRule{Let}, which requires some shuffling of fixed points.
We need to show that
\[
\dsem{\sLet{\xeng}e}{\rho (y \mapsto \dsem{x}{\rho})} = \dsem{ (\sLet{\xeng}e)[x/y]}{\rho}.
\]
using
$\dsem{e}{\rho' (y \mapsto \dsem{x}{\rho'})} = \dsem{ e[x/y]}{\rho'}$ and
$\dsem{e_i}{\rho' (y \mapsto \dsem{x}{\rho'})} = \dsem{ e_i[x/y]}{\rho'}$ for $i=1,\ldots,n$.

Let $\Gamma = (\xen)$. The variables $x_1,\ldots,x_n$ are fresh. In particular, none of them are $x$ or $y$.

We first show that
\begin{align*}
\esemu{\Gamma}{(\rho (y \mapsto \dsem{x}{\rho}))} = (\esemu{\Gamma[x/y]}{\rho})(y \mapsto \dsem{x}{\esemu{\Gamma[x/y]}{\rho}}) \tag{$\ast$}
\end{align*}
using the antisymmetry of $\sqsubseteq$. Let $\rho_L$ and $\rho_R$ denote the left- and right-hand-side of the equation.
\begin{compactitem}[$\sqsubseteq$:]
\item[$\sqsubseteq$:]
By the leastness of the fixed point, it suffices to show that $\rho (y \mapsto \dsem{x}{\rho}) + \dsem{\Gamma}{\rho_R} = \rho_R$, which we verify pointwise.
\begin{compactitem}
\item For $y$, because the $x_i$ are fresh, we have $\rho\, x$ on both sides.
\item For the $x_i$, we have to show $\dsem{e_i}{\rho_R} = \dsem{e_i[x/y]}{\esemu{\Gamma[x/y]}{\rho}}$, which is our induction hypothesis (with $\rho' = \rho_R$).
\item For any other variable $x'$, we have $\rho\, x'$ on both sides.
\end{compactitem}

\item[$\sqsupseteq$:]
Clearly $\dsem{x}{\esemu{\Gamma[x/y]}\rho} = \rho\, x = \rho_L\, y$, so it remains to show that $\esemu{\Gamma[x/y]}{\rho} \sqsubseteq \rho_L \setminus \{y\}$. We again use the leastness of the fixed point and verify the inequality
$\rho + \dsem{\Gamma[x/y]}{\rho_L\setminus\{y\}} \sqsubseteq \rho_L\setminus\{y\}$ pointwise:
\begin{compactitem}
\item For $y$, as $y$ is fresh with regard to $\rho$, we have $\bot$ on both sides.
\item For the $x_i$, we have to show $\dsem{e_i[x/y]}{\rho_L \setminus \{y\}} = \dsem{e_i}{\rho_L}$, which follows from our induction hypothesis (with $\rho' = \rho_L \setminus\{y\})$ and $(\rho_L \setminus\{y\})(y \mapsto \dsem{x}{\rho_L\setminus\{y\}}) = \rho_L$.
\item For any other variable $x'$, we have $\rho\, x'$ on both sides.
\end{compactitem}
\end{compactitem}

Finally, we calculate
\begin{align*}
\beginright
\dsem{\sLet{\xeng}e}{\rho (y \mapsto \dsem{x}{\rho})} \\
&= \dsem{e}{\rho_L} &
\aexpl{by the denotation of let expressions} \\
&= \dsem{e}{\rho_R} &
\aexpl{by $(\ast)$} \\
&= \dsem{e[x/y]}{\esemu{\Gamma[x/y]}{\rho}} &
\aexpl{by the induction hypothesis}  \\
&= \dsem{ (\sLet{\xeng}e)[x/y]}{\rho} &
\aexpl{by the denotation of let expressions.}
\end{align*}
\end{proof}

The final lemma required for the correctness proof of shows that the denotation of a heap with only fresh variables can be merged with the heap it was defined over:

\begin{lemma}
\label{lem:esem-merge}
If $\dom \Gamma$ is fresh with regard to $\Delta$ and $\rho$, then
\[
\esem{\Gamma}{\esem{\Delta}\rho} = \esem{\Gamma, \Delta}\rho.
\]
\end{lemma}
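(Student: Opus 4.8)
The plan is to prove the two heap denotations equal by the antisymmetry of $\sqsubseteq$, using in each direction the characterisation of the heap denotation as a least pre-fixed point supplied by Lemma \ref{lem:esem_below}. The freshness of $\dom\Gamma$ with regard to $\Delta$ will be needed at exactly one point, namely to let a binding of $\Delta$ ignore the surrounding $\Gamma$-layer via Lemma \ref{lem:see_through_fresh}.

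For the direction $\esem{\Gamma}{\esem{\Delta}\rho} \sqsubseteq \esem{\Gamma,\Delta}\rho$ I would apply Lemma \ref{lem:esem_below} with $\rho^* = \esem{\Gamma,\Delta}\rho$. Since the left-hand side is the fixed point for the heap $\Gamma$ over the base environment $\esem{\Delta}\rho$, this reduces to two obligations: that $\esem{\Delta}\rho \sqsubseteq \esem{\Gamma,\Delta}\rho$ and that $\dsem{\Gamma}{\esem{\Gamma,\Delta}\rho} \sqsubseteq \esem{\Gamma,\Delta}\rho$. The second is immediate from Lemma \ref{lem:esem_this}. For the first I would apply Lemma \ref{lem:esem_below} a second time, now for the heap $\Delta$ over base $\rho$ with the same target $\esem{\Gamma,\Delta}\rho$; its two hypotheses, $\rho \sqsubseteq \esem{\Gamma,\Delta}\rho$ and $\dsem{\Delta}{\esem{\Gamma,\Delta}\rho} \sqsubseteq \esem{\Gamma,\Delta}\rho$, follow from Lemma \ref{lem:rho_below_esem} and Lemma \ref{lem:esem_this} respectively.

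For the converse $\esem{\Gamma,\Delta}\rho \sqsubseteq \esem{\Gamma}{\esem{\Delta}\rho}$ I would again invoke Lemma \ref{lem:esem_below}, this time with $\rho^* = \esem{\Gamma}{\esem{\Delta}\rho}$. The base condition $\rho \sqsubseteq \esem{\Gamma}{\esem{\Delta}\rho}$ follows from two applications of Lemma \ref{lem:rho_below_esem}, so it remains to check $\dsem{\Gamma,\Delta}{\esem{\Gamma}{\esem{\Delta}\rho}} \sqsubseteq \esem{\Gamma}{\esem{\Delta}\rho}$ pointwise. For a binding $x\mapsto e$ of $\Gamma$ this is Lemma \ref{lem:esem_this}. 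For a binding $x\mapsto e$ of $\Delta$ the freshness hypothesis enters: since $\dom\Gamma$ is fresh with regard to $e$, Lemma \ref{lem:see_through_fresh} lets me strip the outer $\Gamma$-environment, giving $\dsem{e}{\esem{\Gamma}{\esem{\Delta}\rho}} = \dsem{e}{\esem{\Delta}\rho}$, which equals $(\esem{\Delta}\rho)\,x$ by Lemma \ref{lem:esem_this} and in turn $(\esem{\Gamma}{\esem{\Delta}\rho})\,x$ by Lemma \ref{lem:esem_other}, because $x\notin\dom\Gamma$.

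I expect this last case, the $\Delta$-bindings in the $\sqsupseteq$ direction, to be the only genuine obstacle: it is the unique place where disjointness of $\dom\Gamma$ from the variables of $\Delta$ is indispensable, and the crux is keeping exactly straight which variables each nested environment can see. The corresponding statement for the update-based semantics is obtained by the very same argument, replacing the lookup Lemmas \ref{lem:esem_this} and \ref{lem:esem_other} by their update-based counterparts \ref{lem:esemu_this} and \ref{lem:esemu_other}, and observing via Lemma \ref{lem:deneq} that the denotation of the individual expressions is unaffected by the change of heap semantics.
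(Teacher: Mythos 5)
Your main argument is correct, but it is a proof of the wrong lemma: it reproduces, essentially step for step, the paper's own proof of Lemma \ref{lem:esem_merge} in Appendix \ref{sec:denprops}, which concerns the standard, $\sqcup$-based heap semantics. The statement you were given is Lemma \ref{lem:esem-merge} in Appendix \ref{updsemanticsprops}: although it is written with the plain brackets (the paper silently drops the superscript after Lemma \ref{lem:deneq}, which licenses this only for expressions, not heaps), the heap semantics meant there is the update-based one $\esemu{\_}$ --- this is the form needed in the \sRule{Let} case of Theorem \ref{thm:thm2}, and the paper's own proof of it manipulates $+$, not $\sqcup$. So the real content lies in your final paragraph, and there the transfer claim has a genuine gap.

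The gap is that your argument does not rest only on the lookup Lemmas \ref{lem:esem_this} and \ref{lem:esem_other}, which do have update-based counterparts, but also on Lemma \ref{lem:rho_below_esem}, which has none: $\rho \sqsubseteq \esemu{\Gamma}\rho$ is false in general, since the right-sided update discards $\rho$ on $\dom\Gamma$. For instance, with $\Gamma = (x \mapsto \sLam{a}{\sLet{b=b}{b}})$ and $\rho\,x = \sFn{\lambda\_.\sFn{\lambda z.z}}$ we get $(\esemu{\Gamma}\rho)\,x = \sFn{\lambda\_.\bot} \not\sqsupseteq \rho\,x$; this is exactly the phenomenon exploited in Section \ref{counterexample}. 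Your first direction needs $\rho \sqsubseteq \esemu{\Gamma,\Delta}\rho$ and your second needs $\rho \sqsubseteq \esemu{\Delta}\rho$, but the lemma's hypothesis makes only $\dom\Gamma$, not $\dom\Delta$, fresh with regard to $\rho$, so on $\dom\Delta$ both of these intermediate claims can be false --- they are not repairable invocations. Structurally: for $\sqcup$ the pre-fixed-point condition decomposes into the two hypotheses of Lemma \ref{lem:esem_below}, but for $+$ the condition $\rho + \dsem{\Gamma,\Delta}{\rho^*} \sqsubseteq \rho^*$ is strictly weaker, demanding nothing of $\rho$ on $\dom{(\Gamma,\Delta)}$, and this weakening is indispensable here. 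The paper's proof therefore argues directly with the leastness of the fixed point of the $+$-functorial, verifying the fixed-point equation pointwise in three cases: $\dom\Gamma$ by unrolling, $\dom\Delta$ via freshness (your Lemma \ref{lem:see_through_fresh} step, which does carry over), and the remaining variables, where both sides are $\rho\,x$. The case analysis survives; the scaffolding around it must be rebuilt.
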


\begin{proof}
First note that 
\begin{align*}
\esem{\Delta}\rho \preceq \esem{\Gamma}{\esem{\Delta}\rho}, \tag{$\ast$}
\end{align*}
as the variables bound in $\Gamma$ are fresh and existing bindings in $\esem{\Delta}\rho$ keep their semantics.

We use the antisymmetry of $\sqsubseteq$, and the leastness of least fixed points.
\begin{compactitem}[$\sqsubseteq$:]
\item[$\sqsubseteq$:] We need to show that $\esem{\Delta}\rho + \dsem\Gamma{\esem{\Delta,\Gamma}\rho} = \esem{\Delta,\Gamma}\rho$. This follows from ($\ast$) and from unrolling the fixed point on the right hand side once.
\item[$\sqsupseteq$:] We need to show that $\rho + \dsem{\Gamma, \Delta}{\esem{\Gamma}{\esem{\Delta}\rho}} = \esem{\Gamma}{\esem{\Delta}\rho}$, which we verify pointwise.
\begin{compactitem}
\item For $x\in \dom\Gamma$, this follows from unrolling the fixed point on the right hand side once.
\item For $x\mapsto e \in \dom\Delta$ (and hence $x\notin \dom \Gamma$), we have
\begin{align*}
(\rho + \dsem{\Gamma, \Delta}{\esem{\Gamma}{\esem{\Delta}\rho}})\, x 
&= \dsem{e}{\esem{\Gamma}{\esem{\Delta}\rho}} \\
&= \dsem{e}{\esem{\Delta}\rho} \\
\aexpl{because $\dom\Gamma$ is fresh with regard to $e$} \\
&= (\esem\Delta\rho)\, x\\
\aexpl{by unrolling the fixed point} \\
&= (\dsem{\Gamma}{\esem\Delta\rho})\, x\\
\aexpl{because $x\notin \dom\Gamma$}
\end{align*}
\item For $x\notin \dom \Gamma \cup \dom \Delta$, we have $\rho\, x$ on both sides.
\end{compactitem}
\end{compactitem}
\end{proof}

\end{document}